\documentclass[11pt]{article}
\usepackage[letterpaper,margin=1in]{geometry}
\usepackage[T1]{fontenc}
\usepackage{lmodern,microtype}
\usepackage{amsmath,amssymb,amsthm,mathtools}
\usepackage{aliascnt,enumitem}
\usepackage{algorithm}
\usepackage{algpseudocode}
\usepackage{float}
\usepackage[backend=biber,style=numeric-comp,sorting=nyt,natbib=true,maxbibnames=99,maxcitenames=2,mincitenames=1]{biblatex}
\usepackage{xcolor}
\usepackage[most]{tcolorbox}
\usepackage[colorlinks=true,linkcolor=blue!45!black,citecolor=blue!45!black,urlcolor=blue!45!black]{hyperref}
\usepackage[nameinlink,noabbrev]{cleveref}
\setlist[itemize]{leftmargin=1.5em,itemsep=2pt,topsep=4pt}
\setlist[enumerate]{leftmargin=1.7em,itemsep=3pt,topsep=4pt}
\allowdisplaybreaks[1]
\newtcolorbox{questionbox}{
  colback=gray!6,
  colframe=gray!30,
  boxrule=0.4pt,
  arc=1.5mm,
  left=2mm,right=2mm,top=1.5mm,bottom=1.5mm,
  before skip=6pt,after skip=6pt
}

\newtheorem*{externaltheorem}{Theorem}
\newtheorem{theorem}{Theorem}[section]
\newaliascnt{lemma}{theorem}
\newtheorem{lemma}[lemma]{Lemma}
\aliascntresetthe{lemma}
\newaliascnt{corollary}{theorem}
\newtheorem{corollary}[corollary]{Corollary}
\aliascntresetthe{corollary}
\newaliascnt{proposition}{theorem}
\newtheorem{proposition}[proposition]{Proposition}
\aliascntresetthe{proposition}
\AddToHook{env/theorem/begin}{\crefalias{section}{theorem}}
\AddToHook{env/lemma/begin}{\crefalias{section}{lemma}}
\AddToHook{env/corollary/begin}{\crefalias{section}{corollary}}
\AddToHook{env/proposition/begin}{\crefalias{section}{proposition}}
\theoremstyle{definition}
\crefname{theorem}{Theorem}{Theorems}
\crefname{lemma}{Lemma}{Lemmas}
\crefname{corollary}{Corollary}{Corollaries}
\crefname{proposition}{Proposition}{Propositions}
\crefname{section}{Section}{Sections}
\newcommand{\E}{\mathbb E}
\newcommand{\Prb}{\mathbb P}
\newcommand{\1}{\mathbf 1}
\newcommand{\ALG}{\mathsf{ALG}}
\newcommand{\OPT}{\mathsf{OPT}}
\newcommand{\Ranking}{\textnormal{\textsc{Ranking}}}
\newcommand{\Scan}{\operatorname{Scan}}
\newcommand{\rhostar}{\rho^\star}
\hypersetup{pdftitle={Beating One Half for Online Bipartite Matching with Reusable Resources},pdfauthor={Xiaohui Bei, Zhihao Gavin Tang, Wenhao Wu}}

\title{Beating One Half for Online Bipartite Matching\\with Reusable Resources}
\author{%
Xiaohui Bei\thanks{Nanyang Technological University. Emails: \href{mailto:xhbei@ntu.edu.sg}{xhbei@ntu.edu.sg} (Xiaohui Bei) and \href{mailto:wenhao008@e.ntu.edu.sg}{wenhao008@e.ntu.edu.sg} (Wenhao Wu).}%
\and Zhihao Gavin Tang\thanks{Key Laboratory of Interdisciplinary Research of Computation and Economics, Shanghai University of Finance and Economics. Email: \href{mailto:tang.zhihao@mail.shufe.edu.cn}{tang.zhihao@mail.shufe.edu.cn}.}%
\and Wenhao Wu\footnotemark[1]%
}
\date{}

\begin{document}
\maketitle
\begin{abstract}
We study online bipartite matching with unit-inventory reusable resources, where requests arrive in an adversarially fixed order, and each use of a resource makes it unavailable for an independent duration drawn from a resource-dependent distribution. The benchmark knows all requests in advance but cannot observe a duration before choosing the corresponding use.

The classical \Ranking{} algorithm of Karp, Vazirani, and Vazirani (STOC 1990) fixes a uniformly random priority order of the resources and matches each arriving request to its highest-priority available neighbor. It achieves the optimal competitive ratio $1-1/e$ for unweighted nonreusable resources, but whether it beats $1/2$ for reusable resources has remained open. We prove that, for unweighted resources with resource-dependent stochastic durations, \Ranking{} achieves a competitive ratio of $(5-2\sqrt3)/3\approx0.511966$. We also give a black-box reduction from unweighted \Ranking{} to resource-weighted matching: any unweighted competitive ratio $\alpha>1/2$ yields a weighted ratio strictly above $1/2$. With independent sampling access to the duration distributions, the reduction gives a weighted ratio of $0.500034$. These results resolve two questions left open by Delong et al. (MOR 2024): whether \Ranking{} beats $1/2$, and whether one can beat $1/2$ under stochastic durations.

We analyze \Ranking{} resource by resource, rather than request by request. For deterministic durations, this gives a reduction to random-order greedy for a coverage function. 
We then extend the analysis to stochastic durations by comparing the residual schedules of \Ranking{} and a greedy algorithm, and apply a finer analysis of the random ranks to obtain the stated $0.511$ bound. For the weighted reduction, we apply \Ranking{} within groups of similar weights and uses weighted greedy to control the loss between groups.
\end{abstract}

\newpage

\section{Introduction}\label{sec:intro}

Many online platforms and service systems face a common operational problem: a limited pool of resources must be allocated to demand that arrives sequentially over time. Examples include car rental, job scheduling, hospital bed assignment, and cloud computing. In many such settings, using a resource does not consume it permanently: the resource is occupied for some time and then becomes available again. A vehicle can serve another trip after it returns, a server after a job is completed, and a room or bed after it is released. We refer to such resources as \emph{reusable}.

With reusable resources, an assignment affects both current reward and future availability. Serving a request now may prevent the same resource from serving a later request with fewer alternatives. Holding the resource back preserves flexibility, but may waste capacity if favorable future demand never arrives. The tradeoff is already prevalent with \emph{unit inventory}, where there is only one copy of each resource. Heterogeneous or stochastic service durations add another source of uncertainty because different uses release capacity at different times.

A natural model for such problems is \emph{online bipartite matching with reusable resources} (OBMRR). There is a fixed set $V$ of resources, and requests arrive one at a time. Each request must be rejected or assigned immediately to an available compatible resource. A use of resource $v$ earns reward $c_v$ and makes it unavailable for a duration drawn from $\mathcal D_v$. Durations are independent, and their distributions may depend on the resource. In the unweighted problem, all $c_v=1$; in the resource-weighted problem, rewards may differ arbitrarily. We compare the algorithm with a nonanticipatory benchmark: it knows the full request sequence, the compatibility graph and all duration distributions, but learns the duration of each use only after choosing that use.

If all durations exceed the time horizon, no resource returns before the last request, and the model reduces to classical online bipartite matching. In that setting, the celebrated \Ranking{} algorithm of \citet{karp-vazirani-vazirani-1990} samples one uniformly random priority order of the resources and assigns each request to its highest-priority available neighbor. It achieves the optimal competitive ratio $1-1/e$; \citet{devanur-jain-kleinberg-2013} give a primal-dual analysis. \Ranking{} also has a randomized static-pricing interpretation~\citep{eden-feldman-fiat-segal-2021}: assign each resource a random fixed price and match each request to its cheapest available compatible resource.

For the case that all the durations take the same constant, \citet{delong-et-al-2024} were the first to beat $1/2$ with integral algorithms at unit inventory. Their \textit{Periodic Reranking} algorithm achieves competitive ratio $0.589$, and an online-correlated-rental algorithm achieves $0.505$. Both guarantees extend to weighted resources.
Under the reusable setting, the difficulty of analyzing \Ranking{} is that changing one priority can affect assignments throughout the horizon. An early match changes a return time, which may change a later match and the return time of another resource. The resulting chain of changes is not controlled by the local comparisons used in classical \Ranking{} analyses. 

The above results lead to two questions left open by \citet{delong-et-al-2024}:

\begin{questionbox}
\textbf{Question 1.}\quad
Does \Ranking{} achieve a competitive ratio strictly larger than $1/2$ in the general reusable resources setting?
\end{questionbox}

\begin{questionbox}
\textbf{Question 2.}\quad
Can one beat $1/2$ for heterogeneous deterministic durations, or even for resource-dependent stochastic durations?
\end{questionbox}
The second question is also highlighted in recent overviews: \citet{huang-tang-wajc-2024} identify heterogeneous deterministic durations as an open case, while \citet{niazadeh-udwani-2026} discuss the general stochastic-duration problem as a prominent open direction.

\subsection{Our Results}

Our first result answers both questions for unweighted resources.

\begin{theorem}\label{thm:main}
For every unweighted instance with unit-inventory reusable resources and resource-dependent i.i.d.\ usage durations, the \Ranking{} algorithm satisfies
\begin{equation}\label{eq:main-guarantee}
    \E[\ALG]\ge \rhostar\E[\OPT],
    \qquad
    \rhostar=\frac{5-2\sqrt3}{3}=0.511966\ldots.
\end{equation}
The algorithm needs neither the duration distributions nor sampling access to them.
\end{theorem}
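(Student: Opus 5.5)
We analyze \Ranking{} resource by resource, as the abstract suggests, rather than request by request. The first step is to reduce the benchmark to a fractional upper bound. We write $x_{uv}$ for the probability that $\OPT$ assigns request $u$ to resource $v$. Because $\OPT$ is nonanticipatory, for each resource $v$ and each time $t$ we get $\sum_{u\le t} x_{uv}\Pr[D_v> t-t_u]\le 1$. We then decompose $\E[\OPT]=\sum_v \sum_u x_{uv}$ resource by resource. For each fixed resource $v$, we aim to show that the expected number of requests $u$ with $x_{uv}>0$ that \Ranking{} serves, with credit charged to $v$, is at least $\rhostar\sum_u x_{uv}$. The charging scheme splits each unit of $\OPT$'s use of $v$ between two sources. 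Credit is $v$'s own matches when $v$ is free, and credit is the match received by $u$ elsewhere when $v$ is busy or when $u$ is taken by a higher-ranked resource.

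For deterministic durations, fix $v$ and condition on the ranks of all other resources. We view the set of requests \Ranking{} serves as a function of $v$'s rank $y_v$. Lowering $y_v$, that is, giving $v$ higher priority, can only enlarge the time coverage of $v$'s busy intervals. The count of requests in $N(v)$ that get served by someone is then a coverage-type, submodular function of the sets of intervals that $v$ claims. The plan is to show that \Ranking{}, viewed from $v$, behaves like random-order greedy maximizing a monotone coverage function. We then invoke the $1-1/e$ or $1/2$-type random-order greedy guarantee to get a per-resource bound strictly above $1/2$.

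To handle stochastic durations, we couple \Ranking{} with a greedy process that runs on $v$'s residual schedule. The residual schedule is the set of requests left to $v$ by higher-priority resources. We compare the times at which $v$ becomes busy in the two processes. Using independence of durations across uses, we show that the expected coverage under \Ranking{} dominates that of the greedy process, in expectation over the durations.

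The final step is a finer analysis of the random ranks. We split each $\OPT$ edge $(u,v)$ according to whether $u$ is matched to a resource of rank below $y_v$, using the standard threshold rank $\theta$. This gives a lower bound of the form
\[
\int_0^1 \bigl(\text{gain when } y_v<\theta\bigr)\,dy_v+\int_0^1\bigl(\text{gain when } y_v\ge\theta\bigr)\,dy_v.
\]
We then optimize over a one-parameter family of charging splits. We expect the optimum to come from a quadratic whose root produces $\sqrt3$, yielding $(5-2\sqrt3)/3$.

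The main obstacle is the stochastic coupling. Changing $v$'s rank perturbs return times of other resources, and in the stochastic case the cascade cannot be frozen by conditioning. Our first attempt will be to condition on the full realization of duration samples indexed by use number per resource, not by request. With that conditioning the schedule becomes a deterministic function of the ranks, so a monotonicity or domination argument can go through.
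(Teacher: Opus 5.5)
Your plan has a genuine gap at its center. The per-resource threshold step is the classical Karp--Vazirani--Vazirani / Devanur--Jain--Kleinberg argument. It relies on the fact that moving $v$ ahead in the order can only improve every request's match: the request stays matched to a resource of rank at most $\theta$. With reusable resources this fails already for deterministic durations. Take $N_{v'}=\{1,2,3\}$ with $d_{v'}=2$, and $N_v=\{1,3\}$ with $d_v\ge3$. If $v$ is ranked below $v'$, then $v'$ serves requests $1$ and $3$. If $v$ is ranked above $v'$, then $v$ serves $1$, $v'$ serves $2$, and request $3$ is lost because both resources are busy. Conditioning on durations indexed by use number makes the outcome a deterministic function of the ranks, but the cascade remains, so there is no monotonicity for $\theta$ to exploit.

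Your deterministic paragraph has a related error. Raising $v$'s priority does not enlarge the set of times covered by $v$'s busy intervals: with $d_v=3$, scanning $\{1,3\}$ instead of $\{3\}$ moves the busy set from $\{3,4,5\}$ to $\{1,2,3\}$. Only the number of accepted requests is monotone. Also, random-order greedy guarantees for partition matroids are global statements about all classes processed in random order, not per-resource bounds. (The relevant one is Buchbinder et al.'s $0.5096$, which the paper uses only for deterministic durations.)

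Finally, your charging credits a single match $(u,v')$ both to $v'$ and to the benchmark owner of $u$. Unless each unit of gain is split between the two, summing per-resource bounds yields only $\rhostar/2$. You leave that split unspecified, and the $\sqrt3$ is inferred from the target rather than derived. As written, the plan proves no bound, and its central step does not go through.

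The paper avoids per-request monotonicity entirely. It processes resources in rank order, each scanning what higher-priority resources left. Its key lemma compares $v$'s earliest scan on $[T]\setminus C$ with the residual $O_v\setminus C$ of an \emph{independently run} nonanticipatory benchmark schedule. The coupling gives the scan's $r$th duration sample to the benchmark's $r$th use of $v$ outside $C$. This preserves the benchmark's law because membership in $C$ is known before each duration is revealed.

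A greedy process on $v$'s residual set is just the scan itself, so it is not a useful comparator. The comparator has to be the benchmark, and $C$ must be independent of both executions. The paper also never passes to a fractional LP; that bound is at least $\E[\OPT]$ and hence a harder target.

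Charging benchmark requests to their owners gives rank densities $w,a,b$: all captures, captures before the owner is processed, and captures at or after it. These satisfy $w\ge a+b$ and $w\ge 1-J$, where $J(x)=\int_0^x a(s)/(1-s)\,ds$. These two alone give only $W\ge(1+\beta)/2$.

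The ingredient that beats $1/2$ is a third inequality across a rank cut: $w(x)\ge D(p)-\int_p^x b(s)/(1-s)\,ds$ for $x\ge p$. To get it, the paper does the following.
\begin{enumerate}
\item Run an independent \Ranking{} on the resources ranked after $p$.
\item Embed it as the first $q=1-p$ fraction of a fresh full order, which expresses its coverage of the benchmark requests owned by $P_p$ through $a$.
\item Subtract what $C_p$ already took.
\item Show that the remaining useful coverage, divided by $1-x$, decays only through $b$.
\end{enumerate}
The constant $(5-2\sqrt3)/3$ then comes from an explicit nonnegative combination of the three inequalities that cancels every $a$-term, with the cut at $p_0=1-1/\sqrt3$ and the switch at $\tau=\sqrt3-1$. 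None of these ingredients appears in your proposal.
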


Taking each $\mathcal D_v$ to be a point mass includes heterogeneous deterministic durations. The benchmark may adapt to the durations of its previous uses; it is not restricted to a fixed offline schedule.

For weighted resources, we give a black-box reduction.

\begin{theorem}[Black-box reduction]\label{thm:weighted-reduction}
Suppose \Ranking{} is $\alpha$-competitive on all unweighted instances of the model, for some $\alpha>1/2$. There is a randomized online algorithm for the resource-weighted model with competitive ratio $1/2+\varepsilon(\alpha)$, where $\varepsilon(\alpha)>0$ depends only on $\alpha$. The algorithm uses \Ranking{} on unweighted subinstances and requires independent sampling access to each duration distribution.
\end{theorem}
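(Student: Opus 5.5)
We reduce to the unweighted case by grouping resources with similar weights and run a random mixture of two algorithms. With probability $p$ we run a \emph{weighted greedy} algorithm, which serves each request with its maximum-weight available neighbor. With probability $1-p$ we run \emph{grouped \Ranking{}}. For grouped \Ranking{}, we draw a random shift $\theta\in[0,1)$ and set the geometric base to $\gamma>1$, a large constant depending on $\alpha$. A resource $v$ goes to class $k$ when $\log_\gamma c_v+\theta\in[k,k+1)$. Inside each class all weights are within a factor $\gamma$, so the class behaves like an unweighted instance up to a factor $\gamma$.

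The grouped algorithm first selects a random subset of classes. One option is to keep either all even or all odd classes, so that kept classes differ in weight by at least a factor $\gamma$. It then splits the request sequence among the kept classes, which it can do in several ways. The plan is to commit to the following: each arriving request is offered to the highest kept class that has an available neighbor, and within that class \Ranking{} chooses the resource.

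Because kept classes are separated by factor $\gamma^2$, the loss from serving a request in the "wrong" class is controlled: whenever a request is diverted upward, the value gained is at least $\gamma^2$ times the value lost below. We need the distributions for an LP/fluid benchmark. The sampling access is used to estimate, for each resource, quantities from the benchmark's fractional solution. For example, we estimate the ex-ante probability that $v$ is available at time $t$ under a fixed plan. Sampling durations lets us simulate that plan and decide online which requests are "intended" for which class. This makes each class's subinstance a genuine unweighted instance: its request sequence is fixed independently of the other classes' randomness.

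\textbf{Step 1 (benchmark decomposition).} Write $\E[\OPT]\le\sum_k \gamma^{k+1}\,\E[\OPT_k]$, where $\OPT_k$ is the unweighted optimum restricted to class $k$ on the requests the simulated plan assigns there. Averaging over $\theta$ loses at most a factor $(\gamma-1)/(\gamma\ln\gamma)$.

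\textbf{Step 2 (per-class guarantee).} Apply the hypothesis that \Ranking{} is $\alpha$-competitive to each subinstance, giving $\E[\ALG_k]\ge\alpha\,\E[\OPT_k]$ in unweighted count.

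\textbf{Step 3 (interaction).} Bound the loss from requests consumed by higher classes. A charging argument against the greedy part handles this: a request taken away from class $k$ by a higher class earns at least $\gamma^2$ times as much.

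\textbf{Step 4 (mixing with greedy).} Weighted greedy is $1/2$-competitive by the standard charging argument. That argument is valid with stochastic durations against the nonanticipatory benchmark, charging each benchmark use either to the greedy match of that request or to the use of $v$ blocking it. Moreover, its slack is large exactly when grouped \Ranking{} is poor, namely when much benchmark value sits on a few weight scales. We express both guarantees as linear lower bounds in a few aggregated parameters of the instance and take the convex combination. We then choose $p,\gamma$ so that the minimum over parameters exceeds $1/2$ by some $\varepsilon(\alpha)>0$. Since $\alpha-1/2>0$ is fixed, choosing $\gamma$ large and $p$ close to $1$ should yield a small but positive gain; the numeric $0.500034$ would come from optimizing these.

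The main obstacle is Step 3 together with the complementarity in Step 4. Grouped \Ranking{} can lose by half the weight due to cross-class interference; plain greedy loses exactly when there is near-tie weight structure. Making the two loss modes provably disjoint requires a common parametrization of $\OPT$. I would first try parametrizing by the fraction of benchmark value whose request is "blocked" in greedy by a resource of comparable versus much larger weight.
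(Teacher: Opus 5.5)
There is a genuine gap, and the plan as set up cannot reach $1/2+\varepsilon(\alpha)$.

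\textbf{What fails.} First, the routing step cannot be implemented. The benchmark's plan, or any LP/fluid solution, depends on the whole request sequence, which the online algorithm does not see. Sampling durations does not change that. In the paper, the oracle only keeps an auxiliary weighted-greedy execution running when its assignments are not followed. Requests are split online by the filter \eqref{eq:eligible}, which depends only on $g(t)$. The independence you want in Step 2 comes from two things: higher active buckets have priority, and the benchmark is simulated internally with fresh samples for deleted uses.

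Second, the mixture itself fails on a nonreusable instance. Take resources $u,v$ with $c_u=1+\epsilon$ and $c_v=1$; request $1$ is adjacent to both, request $2$ only to $u$.
\begin{itemize}
\item Weighted greedy earns $1+\epsilon\approx M/2$.
\item Except with probability $O(\epsilon)$, both resources share a class. That class survives the even/odd thinning with probability $1/2$, and \Ranking{} then earns about $3/2$ in expectation.
\item So your grouped algorithm earns about $\tfrac38M$, and no convex combination beats $1/2$ by a fixed margin.
\end{itemize}
Greedy is near $M/2$ only when the resources taking benchmark requests from their owners have almost the owners' weights. These are exactly such near-ties, so idling a class with constant probability discards value precisely where you need it.

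More generally, mixing forfeits all greedy reward on the \Ranking{} branch. Standalone grouped \Ranking{} would then have to beat $1/2$ on every greedy-tight instance, but your Steps 1--3 certify far less. With $\gamma$ large, Step 1 alone costs the factor $r(\gamma)=(\gamma-1)/(\gamma\ln\gamma)\to0$, and the complementarity in Step 4 is left unproved.

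\textbf{The missing idea.} Layer rather than mix, and get the complementarity from an explicit slack inequality. \Cref{lem:greedy-bound} shows $\sigma=2G-M\ge\Delta+B$. Here $\Delta$ is the extra reward greedy earns on benchmark requests by using heavier resources, and $B$ is what it earns on them with lighter ones. This is the comparable-versus-different-weight parametrization you were after.

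The paper's algorithm works as follows.
\begin{itemize}
\item Greedy is the backbone.
\item $\gamma$ is close to $1$, so that $r(\gamma)\approx1$.
\item Each bucket is activated with small probability $\eta$; inactive buckets keep following greedy rather than idling.
\item Active resources may override greedy only on kept edges.
\end{itemize}
The random shift then bounds the filtering loss by $\eta\chi(\gamma)\sigma$ (\cref{lem:filter-loss}). Interference from higher active buckets is at most $(\eta/\gamma)\E[Z]$ (\cref{lem:higher-buckets}). This holds because a lower owner bucket is itself active only with probability $\eta$, not because kept weights are far apart. Pathwise, $\ALG\ge G_{\mathrm{inact}}+(1-\theta)Z$ with $\theta=\gamma-1$. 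With $\gamma=1+\varepsilon$ and $\eta=\varepsilon/2$, the coefficient of $\sigma$ stays nonnegative and the gain over $1/2$ is about $\eta(\alpha-1/2)$.

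Finally, the blocking-use charging you cite for greedy does not work pathwise under stochastic durations, since one long greedy use can block several benchmark uses. The paper obtains $G\ge M/2$ in expectation from the coupling in \cref{lem:residual-comparison}.
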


The reduction uses the unweighted guarantee only through $\alpha$; it does not use the proof of that guarantee. Applying it to \cref{thm:main} gives the following numerical bound.

\begin{corollary}\label{cor:weighted-main}
For resource-weighted instances with unit inventory and resource-dependent i.i.d.\ usage durations, there is a randomized online algorithm satisfying
\begin{equation}\label{eq:weighted-guarantee}
    \E[\ALG]\ge 0.500034\,\E[\OPT].
\end{equation}
The algorithm requires independent sampling access to each duration distribution.
\end{corollary}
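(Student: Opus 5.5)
The corollary follows by plugging \cref{thm:main} into \cref{thm:weighted-reduction}. Set $\alpha=\rhostar=(5-2\sqrt3)/3$, which is strictly above $1/2$. By \cref{thm:main}, \Ranking{} is $\alpha$-competitive on every unweighted instance with resource-dependent i.i.d.\ durations. The hypothesis of \cref{thm:weighted-reduction} therefore holds, and it gives a randomized online algorithm for the resource-weighted model with ratio $1/2+\varepsilon(\rhostar)$. The sampling requirement comes directly from the reduction. The remaining work is to show that $\varepsilon(\rhostar)\ge 0.000034$. The statement of \cref{thm:weighted-reduction} only asserts $\varepsilon(\alpha)>0$, so I must open the reduction and track its constants.

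I expect the reduction to be built as follows.
\begin{enumerate}
\item Round weights geometrically into classes $[\gamma^{k},\gamma^{k+1})$, with a random offset.
\item Randomly mix two strategies. With some probability $p$, run weighted greedy, which is $1/2$-competitive in this setting. Otherwise, group the classes into blocks of similar weights and run \Ranking{} within a block. Across blocks, resolve conflicts greedily by weight, using samples from the $\mathcal D_v$ to simulate the residual schedules.
\end{enumerate}
This should give a bound of the form
\[
\E[\ALG]\ge\Bigl(\max_{p,\gamma,\dots}\min\{\,p\cdot\tfrac12+(1-p)\,g_1(\alpha,\gamma),\;\dots\}\Bigr)\E[\OPT],
\]
where the \Ranking{} branch gains roughly $\alpha/\gamma$ within a class but loses a constant fraction between classes. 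The greedy branch is what covers that inter-class loss.

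The plan is therefore: write $\varepsilon(\alpha)$ explicitly as a small optimization over the reduction's free parameters (the mixing probability, the class ratio $\gamma$, and the block length), and evaluate it at $\alpha=\rhostar$. The tiny margin, with $\varepsilon\approx 3.4\times10^{-5}$ against $\alpha-1/2\approx 0.012$, suggests the bound is quadratic or worse in $\alpha-1/2$. For example, $\varepsilon=\Theta((\alpha-\tfrac12)^2)$ with a modest constant would give $(0.012)^2/4\approx 3.6\times10^{-5}$. This matches in magnitude, which makes me fairly confident that this shape is right.

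The main obstacle is step two: extracting the exact form of $\varepsilon(\alpha)$ from the reduction's proof. After that, I need to check that the chosen parameters satisfy every constraint used there and give at least $0.500034$, rounding conservatively.
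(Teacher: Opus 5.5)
\textbf{There is a genuine gap: the constant $0.500034$ is never derived.} Your first step is correct but does not reach the statement. \cref{thm:main} supplies some $\alpha>1/2$, so \cref{thm:weighted-reduction} already gives a weighted ratio strictly above $1/2$ with sampling access. The only extra content of the corollary is the explicit constant $0.500034$, and your proposal never obtains it. You leave ``extracting the exact form of $\varepsilon(\alpha)$'' as the main obstacle, fix no parameters, and check no inequality. Your heuristic $\varepsilon\approx(\alpha-\tfrac12)^2/4$ has the right order of magnitude, but that is a sanity check, not a proof.

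The missing ingredient is the explicit lifting bound of \cref{thm:lifting}. Put $\psi=(2-\gamma)\alpha\eta/(1+\alpha\eta/\gamma)$, $r(\gamma)=(\gamma-1)/(\gamma\ln\gamma)$ and $\chi(\gamma)=1/(\gamma\ln\gamma)$. Then $(1-\eta)/2\ge\psi\chi(\gamma)$ implies $\E[\ALG]\ge(\frac12+\psi r(\gamma)-\frac\eta2)\E[\OPT]$. The corollary is then pure arithmetic. Take the rational guarantee $\alpha_0=12799/25000<\rhostar$ (from $\sqrt3<86603/50000$), $\gamma=503/500$ and $\eta=29/5000$. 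Using $z-z^2/2\le\ln(1+z)\le z-z^2/2+z^3/3$ with $z=3/500$, verify $r(\gamma)>997/1000$, $(1-\eta)/2-\psi\chi(\gamma)>0$, and $\psi\cdot\frac{997}{1000}-\frac\eta2>34\cdot10^{-6}$.

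\textbf{The reduction you guess would not work either.} It is not the paper's reduction, and it lacks the mechanism that makes any improvement possible. Suppose each branch of a global mixture is bounded separately by an instance-independent constant, as in your min--max expression. Then the combination cannot beat the better branch. Either the \Ranking{} branch alone is already above $\frac12$ on weighted instances, which is the hard problem itself. Or, since it ``loses a constant fraction between classes'', the mixture stays at most $\frac12$.

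What is needed is an instance-wise link between greedy's surplus and the cross-class loss, and the paper supplies exactly that. It runs weighted greedy as an auxiliary execution and activates each randomly shifted bucket independently with small probability $\eta$. Active buckets run \Ranking{} on the edges kept by \eqref{eq:eligible} with $\theta=\gamma-1$; inactive buckets follow greedy. The resulting losses are controlled as follows.
\begin{enumerate}
\item The filtering loss is at most $\eta\chi(\gamma)\sigma$, where $\sigma=2G-M\ge\Delta+B$ is greedy's slack (\cref{lem:greedy-bound,lem:filter-loss}).
\item Interference from higher active buckets is at most $\frac\eta\gamma\E[Z]$ (\cref{lem:higher-buckets}).
\item Each active match displaces less than $\theta$ times its reward from greedy.
\end{enumerate}
Together these give \eqref{eq:weighted-bound}, and the $\sigma$-term is dropped precisely under the feasibility condition above. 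There is no ``mixing probability'' or ``block length'' to optimize; the free parameters are $\gamma$ and $\eta$.
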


Sampling is used to simulate weighted greedy. A followed assignment from an inactive bucket uses its actual duration; other greedy uses receive independent samples. For deterministic durations, no sampling oracle is needed.

\subsection{Technical Overview}

\paragraph{From requests to resources.}
The usual primal-dual analysis of \Ranking{} follows requests in their arrival order~\citep{devanur-jain-kleinberg-2013}. We instead process resources in priority order. Each resource scans the requests left by higher-priority resources and accepts whenever it is available. This produces exactly the same matching, but avoids following the chain of changes caused by a local change in priority.

For deterministic durations, the new view gives a direct connection to online submodular maximization. A resource chooses one feasible schedule, and the objective is the number of requests covered by the chosen schedules. Choosing at most one schedule per resource is a partition matroid constraint. Moreover, the earliest feasible scan maximizes the number of uncovered requests served by a resource. Thus \Ranking{} is exactly random-order greedy for this coverage problem. The theorem of \citet{buchbinder-et-al-2020} immediately gives a $0.5096$ guarantee.

\paragraph{Stochastic durations.}
With stochastic durations, there is no fixed family of feasible schedules, so the deterministic coverage reduction is no longer literal. The key replacement is a residual schedule comparison: on any fixed set of remaining requests, the earliest feasible scan of a resource serves at least as many requests in expectation as the remaining part of any independent nonanticipatory schedule for that resource. This lets us keep the resource-by-resource view and the same coverage-style accounting. To improve on the resulting $1/2$ bound, we assign uniform ranks to resources and track benchmark requests relative to the ranks of their owners. A comparison across a rank cut gives one more inequality, and a continuous combination of the three inequalities yields $(5-2\sqrt3)/3$. The same argument also improves the $0.5104$ guarantee of \citet{korula-et-al-2018} for second-order-supermodular valuations; see \cref{app:submodular}.

\paragraph{Weighted resources.}
Traditional primal-dual analysis of matching can incorporate weighted resource rewards; in particular, the reusable-resource algorithms of \citet{delong-et-al-2024} retain their competitive ratio guarantees in the weighted setting. Our unweighted proof does not apply the primal-dual method, so it does not extend in this way. Instead, we use the unweighted guarantee as a black box. We group similar weights using randomly shifted geometric buckets~\citep{epstein-levin-mestre-segev-2010} and activate each bucket with a small probability. Active buckets run unweighted \Ranking{}; inactive resources follow an independent weighted-greedy execution. We filter active edges so that an active match can displace an inactive greedy match only when the latter has much smaller reward. The random shift controls the loss from separating similar weights, and sparse activation controls competition between buckets. The proof then combines the unweighted $\alpha$ guarantee with the $1/2$ guarantee of weighted greedy. The reduction loses part of the unweighted improvement, but preserves a ratio strictly above $1/2$. The idea of using filtered unweighted subproblems to improve a weighted algorithm also appears in \citet{gamlath-kale-mitrovic-svensson-2019}.

\subsection{Related Work}

\paragraph{Classical online matching.}
The optimality of \Ranking{} for unweighted nonreusable matching originates with \citet{karp-vazirani-vazirani-1990}; subsequent analyses include \citet{birnbaum-mathieu-2008,devanur-jain-kleinberg-2013}. \citet{aggarwal-goel-karande-mehta-2011} obtain the optimal $1-1/e$ ratio for resource-weighted nonreusable matching by perturbing rewards. Related extensions include budgeted allocation~\citep{mehta-saberi-vazirani-vazirani-2007}, edge-weighted matching~\citep{fahrbach-huang-tao-zadimoghaddam-2022}, and random-arrival matching~\citep{mahdian-yan-2011}. We refer to \citet{mehta-2013,huang-tang-wajc-2024} for surveys.

\paragraph{Reusable resources.}
Work on reusable resources considers several arrival models, inventory sizes, and allocation rules. In reusable assortment, \citet{rusmevichientong-sumida-topaloglu-2020} study random usage durations, while \citet{gong-et-al-2022} allow potentially adversarial customer preferences and resource-dependent stochastic usage times; the latter obtain a $1/2$ guarantee for a myopic policy and also show that request-dependent usage-time distributions can rule out any nontrivial competitive guarantee. In a broader online-allocation framework encompassing matching, budgeted allocation, and assortment, \citet{goyal-iyengar-udwani-2025} allow arbitrary resource-dependent stochastic usage distributions and obtain the optimal $1-1/e$ ratio when every resource has large starting inventory. \citet{feng-niazadeh-saberi-2024} study Bayesian reusable assortment with large inventory, and \citet{feng-niazadeh-saberi-replenishment} study robustness to exogenous and allocation-coupled inventory shocks. At unit inventory, \citet{delong-et-al-2024} give the common-duration guarantees discussed above. We allow resource-dependent i.i.d.\ durations and prove that \Ranking{} beats $1/2$ for unweighted resources. \citet{baek-wang-2026} study weighted greedy under resource-dependent stochastic durations and prove ratios $1/(2-\beta)$ in general and $(1+\beta)/2$ for geometric durations, where $\beta$ is the minimum probability of an immediate return; their guarantees improve on $1/2$ when $\beta$ is bounded away from zero, whereas ours are uniform over all resource-dependent i.i.d.\ duration distributions. Related reusable matching and allocation models include \citet{dickerson-et-al-2018,dickerson-et-al-2019} and the network-resource framework of \citet{simchi-levi-zheng-zhu-2025}. Other directions include pricing and revenue management~\citep{besbes-elmachtoub-sun-2022,balseiro-ma-zhang-2025,lei-jasin-2020}, online learning and nonstationary allocation~\citep{jia-shi-shen-2024,zhang-cheung-2025}, loss-network allocation~\citep{xie-gurvich-kucukyavuz-2025}, and fair allocation~\citep{liu-hajiesmaili-2025}.

\paragraph{Random-order submodular maximization.}
For adversarial-order online submodular welfare, greedy's $1/2$ guarantee is optimal in the standard efficient-algorithm setting~\citep{kapralov-post-vondrak-2013}. In random order, \citet{korula-et-al-2018} prove ratios $0.5052$ for general submodular welfare and $0.5104$ for second-order-supermodular valuations, including coverage valuations. \citet{buchbinder-et-al-2020} give a simpler $0.5096$ analysis that also applies when the arriving objects are the classes of a partition matroid. Our deterministic reduction uses this latter model. Our rank-based analysis also applies to second-order-supermodular welfare and improves the $0.5104$ guarantee to $(5-2\sqrt3)/3$; see \cref{app:submodular}.

\subsection{Organization}
\Cref{sec:model} defines the model. \Cref{sec:schedules} gives the deterministic warm-up. \Cref{sec:stochastic} develops the stochastic-duration analysis and proves the unweighted guarantee, while \cref{sec:weighted} gives the reduction to weighted resources. \Cref{app:combine,app:arithmetic} contain the calculations for the two competitive ratios. \Cref{app:submodular} extends the rank analysis to second-order-supermodular welfare.

\section{Model and Preliminaries}\label{sec:model}

\paragraph{Instance and dynamics.}
There is a finite set $V$ of resources and a time horizon $[T]=\{1,\ldots,T\}$. Resource $v$ has reward (or weight) $c_v\ge 0$ and is compatible with the requests in a set $N_v\subseteq[T]$. The request sequence and the compatibility graph are fixed in advance by an oblivious adversary. When request $t$ arrives, the algorithm learns its compatible resources and must immediately either assign it to one currently available resource or reject it. Every resource has unit inventory and is initially available; assigning request $t$ to resource $v$ earns reward $c_v$.

Successive uses of a resource have i.i.d. durations. Specifically, for each $v\in V$,
\[
    d_{v,1},d_{v,2},\ldots\stackrel{\mathrm{i.i.d.}}{\sim}\mathcal D_v,
\]
where the duration sequences are independent across resources and take values in the positive integers. If $v$ is used at time $t$ and the realized duration is $d$, then $v$ is unavailable at times $t+1,\ldots,t+d-1$ and becomes available again at time $t+d$. The distribution $\mathcal D_v$ may depend on the resource, but not on the request or on the use number. Deterministic duration $d_v$ is the special case in which $\mathcal D_v$ is a point mass.

\paragraph{Information and benchmark.}
A policy is \emph{nonanticipatory} if every assignment decision is made before the duration of that use is observed; in particular, it cannot condition on duration samples from future uses. After a use is chosen, its realized duration is revealed, so both the online algorithm and the benchmark may adapt to durations observed from earlier uses.

Our benchmark is the optimal nonanticipatory policy with full knowledge of the request sequence, compatibility graph, rewards, and duration distributions. It must nevertheless process requests in chronological order and, like the online algorithm, learns the duration of a use only after committing to that use. Let $\OPT$ denote its random total reward and write
\[
    M=\E[\OPT].
\]
The online algorithm's random reward is denoted by $\ALG$. We say that an algorithm is $\rho$-competitive if $\E[\ALG]\ge \rho M$ for every instance. In the unweighted setting, $c_v=1$ for every resource $v$.

For technical convenience, durations may be truncated at $T+1$: values larger than $T+1$ have the same effect on all future availability decisions within the horizon and reveal no information about later i.i.d. durations. After this truncation the state and action spaces are finite, so an optimal nonanticipatory benchmark exists by backward induction.

\paragraph{Independence conventions.}
Unless a coupling is stated explicitly, the main algorithm, the benchmark, and any auxiliary executions use independent priority randomization and independent duration samples. Whenever we remove requests or edges in the analysis, the remaining requests retain their original arrival times.

\paragraph{The \Ranking{} algorithm.}
Before the first request arrives, \Ranking{} samples a uniformly random permutation of the resources. Each request is assigned to its highest-priority available compatible resource, if one exists. Equivalently, assign independent ranks
\[
    U_v\sim\mathrm{Unif}[0,1],
\]
where smaller rank means higher priority, and always choose an available compatible resource of minimum rank.

\section{Warm-Up: Deterministic Durations}\label{sec:schedules}

We first consider the instances with deterministic durations and unweighted resources. In this special case, a resource-by-resource view of execution turns \Ranking{} into a greedy algorithm for a coverage problem studied by \citet{buchbinder-et-al-2020}. This gives a direct $0.5096$ guarantee and motivates the analysis used for stochastic durations in the next section.

\subsection{Resource-by-Resource Execution}

For a resource $v$ and a request set $B\subseteq[T]$, let $\Scan_v(B)$ be the schedule obtained by scanning $B\cap N_v$ in chronological order: starting with $v$ available and accepting whenever $v$ is available, using an i.i.d. duration sequence from $\mathcal D_v$. Thus $\Scan_v(B)$ is random in general and is deterministic once the duration sequence is fixed. In this section, $\mathcal D_v$ is a point mass, so $\Scan_v(B)$ is deterministic.

Now fix a resource order and all duration sequences. Process resources from highest to lowest priority. For each resource, use the same fixed duration sequence as in chronological fixed-priority greedy. When resource $v$ is reached, let it scan the requests not already assigned. If $S_v$ is the resulting schedule, then
\begin{equation}\label{eq:resource-ordered}
    S_v=\Scan_v\left([T]\setminus\bigcup_{u\text{ before }v}S_u\right).
\end{equation}

\begin{lemma}[Resource-by-resource execution]\label{lem:resource-ordered}
For every priority order and realization of the durations, \eqref{eq:resource-ordered} gives the same matching as chronological fixed-priority greedy.
\end{lemma}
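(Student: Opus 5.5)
We argue by strong induction on the priority order. Fix the order $v_1,v_2,\ldots,v_n$ (highest priority first) and, for each resource, the duration sequence $d_{v,1},d_{v,2},\ldots$. Let $G_v$ be the set of requests that chronological fixed-priority greedy assigns to $v$. In the greedy execution, the $k$-th use of $v$ draws duration $d_{v,k}$; in $\Scan_v$, the $k$-th accepted request also uses $d_{v,k}$. We claim that $S_{v_i}=G_{v_i}$ for every $i$, as ordered lists of requests. This immediately gives equality of the matchings.

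\textbf{Key observation.} In chronological greedy, the behavior of $v_i$ depends only on which of its neighbors are offered to it. A request $t\in N_{v_i}$ is offered to $v_i$ exactly when no higher-priority resource $v_j$ ($j<i$) is available and adjacent to $t$ at time $t$, that is, when $t\notin\bigcup_{j<i}G_{v_j}$. Once offered, $t$ is assigned to $v_i$ if and only if $v_i$ is available at time $t$. Lower-priority resources never affect $v_i$: they only receive requests that $v_i$ either declined because it was busy or was not adjacent to. Hence, with $R_i=[T]\setminus\bigcup_{j<i}G_{v_j}$, the set $G_{v_i}$ is produced by scanning $R_i\cap N_{v_i}$ chronologically. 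The scan starts with $v_i$ available, accepts whenever $v_i$ is available, and updates availability with the successive durations $d_{v_i,1},d_{v_i,2},\ldots$. This is precisely $\Scan_{v_i}(R_i)$ with the same duration sequence.

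One point needs care. The availability of $v_i$ at time $t$ depends only on $v_i$'s own previous uses and their durations. Since the durations are indexed by use number, the two executions consume identical duration values in the same order. We check this by an inner induction over $t$ in $R_i\cap N_{v_i}$.

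\textbf{Outer induction.} For $i=1$, $R_1=[T]$, so $S_{v_1}=\Scan_{v_1}([T])=G_{v_1}$. Assume $S_{v_j}=G_{v_j}$ for all $j<i$. Then the set in \eqref{eq:resource-ordered} equals $R_i$, so $S_{v_i}=\Scan_{v_i}(R_i)=G_{v_i}$.

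The main, though mild, obstacle is stating the key observation precisely. One must verify that in greedy the event ``$t$ reaches $v_i$'' is determined solely by membership of $t$ in the higher-priority schedules, independent of $v_i$'s own state. This holds because greedy picks the minimum-rank available neighbor.
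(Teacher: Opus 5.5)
Your proof is correct and follows the paper's own argument: induction over the priority order, using that a request reaches $v$ in chronological greedy exactly when no higher-priority resource took it, and that $v$ then uses the same duration sequence, use by use, in both executions. Your version mainly spells out the characterization of offered requests, which the paper's short proof leaves implicit.
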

\begin{proof}
Induct on the resource order. Once the schedules of higher-priority resources agree, the two executions offer $v$ the same remaining requests in the same arrival order. They also use the same duration for each successive use of $v$, so its acceptance times agree.
\end{proof}

Note that this is a different view to analyze the algorithm, not a different online implementation. The lemma also holds for stochastic durations after their values are fixed.

\subsection{Reduction to Coverage Maximization}

Suppose resource $v$ has deterministic duration $d_v$. Its feasible schedules are
\[
    \mathcal S_v=\{F\subseteq N_v:
        |s-t|\ge d_v\text{ for all distinct }s,t\in F\}.
\]
The earliest feasible scan is optimal for a single resource, even after some requests have been removed.

\begin{lemma}[Earliest feasible schedule]\label{lem:det-greedy}
For every resource $v$ and request set $B$,
\[
    |\Scan_v(B)|=\max\{|F|:F\in\mathcal S_v,\ F\subseteq B\}.
\]
\end{lemma}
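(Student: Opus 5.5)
The plan is the classical exchange argument for interval scheduling with a fixed gap, run on the sorted requests in $B\cap N_v$.

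\textbf{Feasibility.} First I check that $\Scan_v(B)\in\mathcal S_v$ and $\Scan_v(B)\subseteq B$. With deterministic duration $d_v$, a use at time $t$ makes $v$ unavailable exactly at $t+1,\dots,t+d_v-1$. So successive accepted requests differ by at least $d_v$. Since accepted times increase, every pair of distinct accepted times differs by at least $d_v$. Hence the left side of the identity is at most the right side.

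\textbf{Optimality.} For the reverse inequality, write $\Scan_v(B)=\{s_1<\dots<s_k\}$. Let $F=\{f_1<\dots<f_m\}\in\mathcal S_v$ with $F\subseteq B$. Since $F\subseteq N_v$, every $f_j$ lies in $B\cap N_v$. I will prove by induction on $j$ that if $m\ge j$, then the scan accepts at least $j$ requests and $s_j\le f_j$.

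For $j=1$: the scan starts with $v$ available, so it accepts the earliest request of $B\cap N_v$. Thus $s_1$ exists and $s_1\le f_1$.

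For the inductive step, assume $s_j\le f_j$ and $m\ge j+1$. Then
\[
f_{j+1}\ge f_j+d_v\ge s_j+d_v .
\]
After accepting $s_j$, resource $v$ is available again at every time $\ge s_j+d_v$ until its next acceptance. The request $f_{j+1}\in B\cap N_v$ arrives no earlier than $s_j+d_v$. So either the scan accepted some request in $(s_j,f_{j+1})$, or it accepts $f_{j+1}$ itself. In both cases $s_{j+1}$ exists and $s_{j+1}\le f_{j+1}$. Taking $j=m$ gives $k\ge m$, which proves equality.

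\textbf{Main obstacle.} The only delicate step is the availability convention. I must confirm that ``unavailable at $t+1,\dots,t+d-1$, available at $t+d$'' matches the feasibility gap $|s-t|\ge d_v$ exactly, with no off-by-one error. I must also check that the scan accepts whenever $v$ is available; this is its definition, so the ``stays ahead'' invariant holds.
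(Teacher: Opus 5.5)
Your proof is correct and is the same stays-ahead induction the paper uses: show $s_j\le f_j$ for every $j$, via $f_{j+1}\ge f_j+d_v\ge s_j+d_v$. Your explicit check that $\Scan_v(B)$ itself lies in $\mathcal S_v$, which gives the other inequality, is a step the paper leaves implicit.
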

\begin{proof}
Let $x_r$ and $y_r$ be the $r$th requests in the scan and in any feasible schedule in $B$, respectively. We show by induction that $x_r$ exists and $x_r\le y_r$ whenever $y_r$ exists. This holds for the first request. If it holds for $r$, then
\[
    y_{r+1}\ge y_r+d_v\ge x_r+d_v,
\]
so the scan can make its next use no later than $y_{r+1}$. It therefore serves at least as many requests as the other schedule.
\end{proof}

To obtain the coverage problem, make each pair $(v,F)$, with $F\in\mathcal S_v$, an element. All schedules for $v$ belong to one class
\[
    \mathcal P_v=\{(v,F):F\in\mathcal S_v\}.
\]
Choose at most one element from each class, and maximize
\[
    f(\mathcal X)=\left|\bigcup_{(v,F)\in\mathcal X}F\right|.
\]
Thus $f$ is a monotone coverage function, and the constraint is a partition matroid.

\begin{proposition}[Deterministic reduction]\label{thm:det-lifting}
The coverage problem has the same optimum as the deterministic matching problem. For any resource order, fixed-priority greedy is maximum-marginal greedy on the corresponding classes, with ties resolved by the earliest feasible scan.
\end{proposition}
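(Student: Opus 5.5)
The proof has two parts: equality of optima, and the identification of fixed-priority greedy with maximum-marginal greedy.

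\textbf{Equality of optima.} With deterministic durations there is no stochasticity, so the benchmark is an offline feasible matching. Take any feasible matching and let $F_v$ be the set of requests matched to $v$. Consecutive uses of $v$ are at least $d_v$ apart, since $v$ is unavailable at times $t+1,\ldots,t+d_v-1$ after a use at $t$. Hence $F_v\in\mathcal S_v$. The sets $F_v$ are disjoint, so $\{(v,F_v)\}$ is independent in the partition matroid and has coverage equal to the matching size. Thus the coverage optimum is at least $M$.

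Conversely, take an independent $\mathcal X=\{(v,F_v)\}$. Make the sets disjoint by assigning each covered request to one chosen owner, for instance the first resource in any fixed order. Every subset of a set in $\mathcal S_v$ is again in $\mathcal S_v$, because the pairwise-gap condition is hereditary. So the disjoint sets form valid schedules, and together they give a feasible matching whose size is $f(\mathcal X)$. Therefore the two optima coincide.

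\textbf{Greedy identification.} Fix a resource order and process the classes $\mathcal P_v$ in that order. Suppose the previously chosen elements cover the set $C$, and let $B=[T]\setminus C$. The marginal value of $(v,F)$ is $|F\setminus C|=|F\cap B|$. Since $F\cap B\in\mathcal S_v$ by heredity, the best possible marginal is
\[
\max\{|F'|:F'\in\mathcal S_v,\ F'\subseteq B\}.
\]
By \cref{lem:det-greedy} this maximum equals $|\Scan_v(B)|$. Also $\Scan_v(B)\in\mathcal S_v$ and $\Scan_v(B)\subseteq B$, so choosing $(v,\Scan_v(B))$ attains the maximum marginal. With ties resolved this way, the set covered after step $v$ is $C\cup\Scan_v(B)$, and the newly covered requests are exactly $\Scan_v(B)$.

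Inducting over the order, the chosen schedules are precisely the $S_v$ of \eqref{eq:resource-ordered}, and they are disjoint. By \cref{lem:resource-ordered}, these coincide with the matching produced by chronological fixed-priority greedy.

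\textbf{Main obstacle.} The subtle point is that greedy may pick a schedule overlapping earlier ones, so the covered set need not correspond to a matching. This is resolved by the tie-breaking rule, which always selects a schedule inside $B$, together with heredity of $\mathcal S_v$. Everything else is direct.
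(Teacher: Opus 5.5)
Your proof is correct and follows the paper's argument. Both use the heredity of $\mathcal S_v$ for the two directions of the optimum equivalence. Both then apply \cref{lem:det-greedy} to $B=[T]\setminus C$ to identify the earliest feasible scan as a maximum-marginal choice. You are slightly more explicit than the paper in noting that the scan lies inside $B$, so the chosen schedules are disjoint and, by \cref{lem:resource-ordered}, coincide with chronological fixed-priority greedy.
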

\begin{proof}
A matching gives one schedule per resource and hence a coverage solution of the same value. Conversely, if chosen schedules overlap, keep each covered request in just one of them. Deleting requests preserves feasibility, so the remaining schedules form a matching with the same value.

Suppose the earlier classes cover $C$ and the next class is $\mathcal P_v$. Choosing $(v,F)$ adds $|F\setminus C|$ to the coverage. Since $F\setminus C$ is a feasible schedule on the remaining requests, \cref{lem:det-greedy} gives
\[
    \max_{F\in\mathcal S_v}|F\setminus C|
       =|\Scan_v([T]\setminus C)|.
\]
The resource-by-resource execution therefore makes a maximum-marginal choice in every class.
\end{proof}

We use the following theorem of Buchbinder et al.

\begin{externaltheorem}[Buchbinder et al.~\cite{buchbinder-et-al-2020}, Theorem~1]
Let $f$ be a nonnegative monotone submodular function on a ground set partitioned into classes $\mathcal P_1,\ldots,\mathcal P_m$. If the classes are processed in a uniformly random order and greedy chooses a maximum-marginal element from each class, then its expected value is at least $0.5096$ times the optimum subject to choosing at most one element from each class.
\end{externaltheorem}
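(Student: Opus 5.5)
The statement is an external theorem of Buchbinder et al., which the paper cites and does not prove. What is needed here is the argument that it applies to our setting, together with a sketch of why it holds. For the application I would check the hypotheses in order.

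First, $f(\mathcal X)=|\bigcup_{(v,F)\in\mathcal X}F|$ is a coverage function, so it is nonnegative, monotone and submodular. Second, the classes $\mathcal P_v$ partition the ground set. Third, a uniformly random priority order of the resources is a uniformly random order of the classes. Fourth, by \cref{thm:det-lifting}, \Ranking{} chooses a maximum-marginal element in each class, and the coverage optimum equals $M$. Combining these with \cref{lem:resource-ordered} gives $\E[\ALG]\ge 0.5096\,M$ for deterministic durations.

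For the theorem itself, I would argue as follows. Fix an optimal solution $O=\{o_1,\dots,o_m\}$, with $o_i$ the optimal element of class $\mathcal P_i$. Let $G_k$ be the greedy set after $k$ classes have been processed.

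The $1/2$ baseline comes from the marginal comparison when class $i$ arrives. Greedy's gain is at least $f(o_i\mid G_{\text{before}})$. Summing over classes and using submodularity gives $\E[f(G)]\ge M-\E[f(G)]$, hence $f(G)\ge M/2$.

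To beat $1/2$, I would exploit the random order in continuous time. Give each class an arrival time $U_i$ uniform on $[0,1]$. Then compare two quantities, conditioning on the other arrivals:
\begin{itemize}
\item the marginal value of $o_i$ if class $i$ arrives at time $t$;
\item greedy's value on the remaining classes.
\end{itemize}
This produces a differential inequality in $g(t)=\E[f(G_t)]$. The inequality says that greedy's gain rate at time $t$ is at least the expected value of $O$-elements not yet dominated. It also says that optimal elements "blocked" early are charged to greedy value that can be counted only once. Integrating with a suitable weight then yields a constant $c>1/2$.

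The main obstacle is the charging step: showing that greedy value counted against blocked optimal elements does not double-count across the two inequalities. That is exactly where the $0.0096$ improvement comes from. I would invoke the theorem as a black box rather than reproduce this step.
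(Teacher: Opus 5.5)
Your proposal matches the paper: the paper also gives no proof of this external theorem, and applies it through exactly the checks you list in the proof of \cref{cor:det-5096}. Those checks are a coverage objective, the partition into classes $\mathcal P_v$, a uniformly random class order, and maximum-marginal choices with equal optima via \cref{thm:det-lifting}. Your heuristic for the improvement is not load-bearing, but note that the two inequalities you describe are the analogues of \eqref{eq:owner-residual} and \eqref{eq:owner-capture}, which give only $(1+\beta)/2$ in \cref{cor:baseline} and admit the value $1/2$ (take $w=a=1-x$, $b=0$). So the gain over $1/2$ cannot come from a sharper no-double-counting argument alone; it needs an additional comparison, such as the paper's comparison across a rank cut with an independent \Ranking{} run on the later resources in \cref{lem:rank-cut}.
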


This directly gives the following guarantee for \Ranking{} with deterministic durations.

\begin{corollary}\label{cor:det-5096}
For arbitrary resource-dependent deterministic durations, \Ranking{} is $0.5096$-competitive.
\end{corollary}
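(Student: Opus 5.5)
The plan is to chain \cref{lem:resource-ordered}, \cref{thm:det-lifting}, and the theorem of Buchbinder et al.

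First, fix deterministic durations and a uniformly random priority order of the resources. By \cref{lem:resource-ordered}, the matching produced by \Ranking{} coincides with the resource-by-resource execution \eqref{eq:resource-ordered}. By \cref{thm:det-lifting}, this execution is exactly maximum-marginal greedy on the classes $\mathcal P_v$ of the coverage instance. The classes are processed in the (uniformly random) priority order, and in each class the chosen element is the earliest feasible scan. That element is a maximizer of the marginal $|F\setminus C|$, so the requirement that greedy pick "a maximum-marginal element" is met whatever tie-breaking the external theorem allows.

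The value of \Ranking{} equals $f$ of the chosen elements. Since the chosen schedules are disjoint by construction, this value equals the number of matched requests.

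The function $f$ is a nonnegative monotone coverage function, hence submodular, and the constraint "at most one element per class" is the partition matroid in the theorem. The theorem therefore gives $\E[\ALG]\ge 0.5096\cdot\max f$. By the first part of \cref{thm:det-lifting}, $\max f$ is the offline optimum of the deterministic matching problem.

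It remains to compare this with $M=\E[\OPT]$, which I expect to be the only step needing care. With deterministic durations there is no randomness to adapt to. Hence any nonanticipatory policy, run with its internal randomness fixed, produces a schedule in $\mathcal S_v$ for each resource. Here one should check that $v$ being unavailable at times $t+1,\dots,t+d_v-1$ matches the separation condition $|s-t|\ge d_v$ in the definition of $\mathcal S_v$. The resulting reward is then at most $\max f$ pointwise, so $M\le\max f$. Combining the inequalities gives $\E[\ALG]\ge 0.5096\,M$.

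A minor point is that the ground set may be exponentially large, but it is finite, so the theorem applies as stated.
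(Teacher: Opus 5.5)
Your proposal is correct and follows the paper's proof. \cref{lem:resource-ordered} makes \Ranking{} process the classes $\mathcal P_v$ in uniformly random order, \cref{thm:det-lifting} supplies the maximum-marginal choices and the equality of optima, and the partition-matroid theorem of Buchbinder et al.\ then gives the bound. Your explicit check that the nonanticipatory benchmark with deterministic durations produces schedules in $\mathcal S_v$ (so $M\le\max f$) spells out what the paper folds into the phrase ``same optimum as the deterministic matching problem.''
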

\begin{proof}
Under \Ranking{}, the resources, and hence the classes $\mathcal P_v$, are processed in a uniformly random order. By \cref{thm:det-lifting}, the execution makes a maximum-marginal choice in each class, and the lifted problem has the same optimum as the deterministic matching problem. Since $f$ is a monotone submodular coverage function, the theorem above gives the $0.5096$ guarantee.
\end{proof}

Notice that \Ranking{} does not enumerate the schedules in $\mathcal S_v$: by \cref{lem:det-greedy}, the earliest feasible scan makes the required greedy choice. The random order is over resource classes, not requests, and requests remain in adversarial order. Thus we use the partition-matroid result rather than a theorem that assumes random request arrivals.

\section{Stochastic Durations}\label{sec:stochastic}

We now return to resource-dependent stochastic durations and prove \cref{thm:main}. The coverage reduction from \cref{sec:schedules} no longer applies literally: before the duration samples are realized, there is no fixed family of feasible schedules from which a resource chooses. The resource-by-resource representation, however, remains valid after the samples are fixed, and it suggests the right substitute for the deterministic argument.

Our analysis of \Ranking{} with stochastic durations has three layers. First, we show that on any independently specified set of remaining requests, the earliest feasible scan of a resource is at least as good in expectation as the residual part of any independent nonanticipatory schedule. This is the stochastic analogue of \cref{lem:det-greedy}. Second, we use this comparison to obtain two rank-by-rank inequalities by charging benchmark requests to their benchmark \emph{owners}. These inequalities recover the familiar $1/2$ baseline and already improve on it when sufficiently many benchmark requests are captured after their owners are processed. Third, we introduce a cut in the random priority order. The cut recovers additional information about early captures and supplies the extra inequality needed to beat $1/2$. The purely analytic combination of the three inequalities is deferred to \cref{app:combine}.

\subsection{Residual Schedule Comparison}

The following expected comparison is the stochastic analogue of \cref{lem:det-greedy}. After removing an independently chosen set of requests, the earliest feasible scan performs at least as well in expectation as the part of any independent nonanticipatory schedule that remains.

\begin{lemma}[Residual schedule comparison]\label{lem:residual-comparison}
Let $\widetilde S_v$ be the schedule of resource $v$ under a nonanticipatory comparison policy.
Let $C$ be a set of requests independent of that execution and of the scan's duration samples.
Then
\begin{equation}\label{eq:residual-comparison}
    \E|\Scan_v([T]\setminus C)|
       \ge \E|\widetilde S_v\setminus C|.
\end{equation}
The same inequality holds after conditioning on external randomness that fixes $C$, provided the
comparison execution and the scan's duration samples retain the required independence.
\end{lemma}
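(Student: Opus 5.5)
First condition on $C$; by the stated independence this leaves the law of the scan's duration samples and the law of the comparison execution unchanged. It therefore suffices to prove \eqref{eq:residual-comparison} for a fixed set $B=[T]\setminus C$. Let $\widetilde S_v$ be generated by a nonanticipatory policy for $v$, possibly randomized and possibly depending on other external randomness. Since that randomness is independent of $v$'s own duration draws once we condition, we may further condition on it as well. Then $\widetilde S_v$ is a nonanticipatory (possibly randomized) schedule for the single resource $v$ with i.i.d.\ durations from $\mathcal D_v$. The claim becomes: for any such single-resource policy $\pi$ and fixed $B$,
\[
\E|\Scan_v(B)|\ge \E\bigl|S^\pi\cap B\bigr|.
\]
Here uses of $\pi$ at requests outside $B$ earn nothing but still occupy the resource.

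I will prove this by a dynamic-programming/exchange argument on the single-resource problem. Define the state at time $t$ as ``available'' or ``busy until time $r$''. By i.i.d.\ durations and nonanticipation, the future of $\pi$ from an availability time onward depends on the past only through external randomness. Let $g(t)$ be the optimal expected number of $B$-requests in $N_v\cap[t,T]$ collected by any nonanticipatory policy starting available at time $t$. We have $g(t)\ge g(t+1)$, since a policy started at $t+1$ can be run from $t$ by idling. The Bellman equation at an available time $t\in B\cap N_v$ compares idling, worth $g(t+1)$, with using, worth $1+\E[g(t+D)]$ for $D\sim\mathcal D_v$. Uses at $t\notin B$ give $\E[g(t+D)]\le g(t+1)$, so they are never better than idling. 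The key step is to show that using at every available $t\in B\cap N_v$ is optimal, i.e.
\[
1+\E[g(t+D)]\ge g(t+1).
\]
I will argue this by a coupling. Take an optimal policy started at $t+1$ and let its first use be at time $\tau\ge t+1$ with duration $D'$. Instead use at $t$ with a duration $D$ and let the policy mimic the original after $t+D$. Heuristically, the original policy gains $1$ at $\tau$ and then continues from $\tau+D'$, while the modified policy gains $1$ at $t$ and continues from $t+D\le \tau+D$. Coupling $D=D'$ and using monotonicity of $g$ gives $1+\E g(t+D)\ge 1+\E g(\tau+D)\ge g(t+1)$. This works because $g(t+1)=\E[\mathbf 1\{\text{gain at }\tau\}+g(\tau+D')]$, and $D'$ is independent of $\tau$ by nonanticipation. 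That independence is exactly where the nonanticipatory assumption is used.

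With this optimality in hand, $\E|\Scan_v(B)|=g(1)\ge \E|S^\pi\cap B|$ for every $\pi$, by induction over time via the Bellman equation. Integrating back over the conditioning gives \eqref{eq:residual-comparison} and its conditional version.

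The main obstacle is the rigorous reduction of the multi-resource comparison execution to a single-resource nonanticipatory policy. In the benchmark, $v$'s decisions may depend on other resources' durations and on $v$'s past durations. I need to check that, conditional on everything except $v$'s future durations, each new duration of $v$ is still a fresh $\mathcal D_v$ draw independent of the decision time $\tau$. I will handle this by revealing $v$'s durations sequentially as fresh i.i.d.\ draws at the moments they are committed.
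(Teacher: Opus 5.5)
Your proof is correct, but it takes a different route from the paper. The paper sets up no dynamic program. It fixes $C$, draws the scan's samples $\xi_1,\xi_2,\ldots$, and gives $\xi_r$ to the comparison policy's $r$th use of $v$ outside $C$; uses inside $C$ get fresh samples. Whether a use lies in $C$ is known before its duration is revealed, so the comparison execution keeps its law. Induction then gives $\sigma_r\le\tau_r$ for the $r$th acceptance times, hence $|\Scan_v([T]\setminus C)|\ge|\widetilde S_v\setminus C|$ on every realization of the coupling.

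You instead show that the earliest feasible scan is Bellman-optimal for the single-resource problem on $B$, via the one-step exchange $1+\E[g(t+D)]\ge g(t+1)$, and then compare every policy with $g(1)$. Your version localizes the exchange to a single decision and fits the standard MDP picture. It needs two ingredients the paper avoids, though. The first is reducing a multi-resource comparison execution to a randomized single-resource policy, by conditioning on a seed made of the other resources' duration sequences and internal randomization; this is valid because that seed is independent of $v$'s sequence. The second is the fact that history-dependent policies cannot exceed the Bellman value.

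On the second point, your sentence that the future of $\pi$ after an availability time depends on the past only through external randomness is not right as stated: $\pi$ may react to $v$'s own past durations. What the backward induction actually needs is that, conditional on any history at which $v$ is available at time $t$, the expected remaining gain on $B$ is at most $g(t)$. This holds because the remaining durations are fresh draws.

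The paper's coupling works directly inside the full comparison execution and uses only that each duration of $v$ is fresh when committed. It yields a pathwise comparison, hence stochastic dominance, rather than only an expectation bound. It is also the literal stochastic analogue of \cref{lem:det-greedy}, and the same coupling idea is reused in \cref{sec:weighted}.
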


\begin{proof}
First fix $C$. Draw i.i.d.\ samples
\[
    \xi_1,\xi_2,\ldots\sim\mathcal D_v
\]
for the scan. We couple these samples with the comparison execution in the following way: whenever the
comparison policy uses $v$ at a request outside $C$ for the $r$th time, give that use duration
$\xi_r$. Uses of $v$ at requests in $C$ receive fresh independent samples from $\mathcal D_v$.
All other duration samples in the comparison execution are generated independently as usual.

This coupling does not change the distribution of the comparison execution. Indeed, at every use
of $v$, whether the current request lies in $C$ is already determined before its duration is revealed.
Thus, conditional on the history before that duration is observed, the sample assigned to the use is
fresh, independent of the history, and distributed according to $\mathcal D_v$. By induction over
the successive decisions, the comparison policy therefore has the same distribution as in the
original model.

Let $\sigma_r$ be the time of the scan's $r$th acceptance, and let $\tau_r$ be the time of the
comparison policy's $r$th use of $v$ outside $C$. We claim that
\[
    \sigma_r\le\tau_r
\]
whenever $\tau_r$ exists. The claim is immediate for $r=1$. If it holds for $r$ and
$\tau_{r+1}$ exists, then the use at $\tau_r$ has duration $\xi_r$, so
\[
    \tau_{r+1}\ge\tau_r+\xi_r\ge\sigma_r+\xi_r.
\]
Moreover, $\tau_{r+1}$ is a request outside $C$ compatible with $v$. Hence it is available to the
scan after its $r$th busy period, and the scan accepts its next request no later than
$\tau_{r+1}$. Thus $\sigma_{r+1}\le\tau_{r+1}$.

Consequently, on every realization of this coupling,
\[
    |\Scan_v([T]\setminus C)|
       \ge |\widetilde S_v\setminus C|.
\]
Since the coupled comparison execution has the same distribution as the original one, taking
expectations proves \eqref{eq:residual-comparison}. For random $C$, condition first on $C$.
The conditional version follows in the same way.
\end{proof}

The rest of this section uses \cref{lem:residual-comparison} directly. It preserves the comparison underlying the deterministic coverage reduction without requiring a fixed family of stochastic schedules. The weighted analysis in \cref{sec:weighted} will use the same coupling again, both for weighted greedy and for restricted benchmark policies.

\subsection{Benchmark Owners and Rank Densities}\label{sec:owners}

We now use the residual schedule comparison to relate \Ranking{}'s gains to the benchmark requests that remain unserved as resources are processed in rank order. By distinguishing requests served before their benchmark owners from those served at or after their owners, we obtain a $1/2$ guarantee and identify a source of improvement beyond it.

For the remainder of this section, all rewards equal one. We run an optimal benchmark with randomness independent of the main execution and its ranks. Let $O_v$ be the requests served by resource $v$ in this benchmark, and put
\[
    O=\bigcup_{v\in V}O_v,
    \qquad M=\E|O|>0.
\]
The sets $O_v$ are pairwise disjoint. We call $v$ the \emph{owner} of every request in $O_v$.

Generate the main priority order by independent $U_v\sim\mathrm{Unif}[0,1]$ and define
\[
    P_x=\{v:U_v\le x\},\qquad Q_x=\{v:U_v>x\},
    \qquad O_P=\bigcup_{v\in P}O_v\quad(P\subseteq V).
\]
For a resource $v$, we write
\[
    O_{>v}=\bigcup_{u:U_u>U_v}O_u,
    \qquad
    O_{\le v}=\bigcup_{u:U_u\le U_v}O_u.
\]
Let $S_v$ be resource $v$'s schedule in the main resource-by-resource execution, and let
\[
    C_x=\bigcup_{v:U_v\le x}S_v,
    \qquad C_{<v}=\bigcup_{u:U_u<U_v}S_u.
\]
Thus $C_x$ is the set of requests served by resources with rank at most $x$.

The owner accounting separates benchmark requests according to whether they are captured before or after their owners are processed. This distinction matters because a request captured before its owner gives an immediate gain, but removes one request from the benchmark schedule that will later be used as a comparator for that owner. A request captured at or after its owner creates no such later loss. Accordingly, we track three expected counts: all requests served, benchmark requests served before their owners, and benchmark requests served at or after their owners. Normalize by $M$, and let $w,a,b$ be their densities with respect to rank. Formally, for $0\le x\le1$,
\begin{align*}
    \int_0^x w(s)\,ds
       &=\frac1M\E|C_x|,\\
    \int_0^x a(s)\,ds
       &=\frac1M\E\sum_{v:U_v\le x}|S_v\cap O_{>v}|,\\
    \int_0^x b(s)\,ds
       &=\frac1M\E\sum_{v:U_v\le x}|S_v\cap O_{\le v}|.
\end{align*}
These cumulative expectations are Lipschitz. Indeed, when the rank threshold increases from $x$ to $y$, only resources with ranks in $(x,y]$ are newly added, and each contributes at most $T$ requests. The expected increment is therefore at most $|V|T(y-x)$. Thus the densities exist and are bounded. In particular,
\begin{equation}\label{eq:W-density}
    W:=\int_0^1w(x)\,dx=\frac{\E[\ALG]}M.
\end{equation}
Only the early-capture density $a$ can reduce a benchmark schedule used later as a comparator. To account for this future loss, define
\begin{equation}\label{eq:J-def}
    J(x)=\int_0^x\frac{a(s)}{1-s}\,ds,\qquad x<1.
\end{equation}
The factor $1/(1-s)$ spreads a request captured at rank $s$ over its owner's possible ranks in $(s,1)$. Thus $J(x)$ is the normalized benchmark mass already lost from the comparator of a resource whose rank is $x$. All statements about densities below hold almost everywhere.

For the conditioning arguments below, expose the main execution in increasing rank order. By rank $x$, the ranks and schedules of resources in $P_x$, together with the duration samples used in those schedules, have been revealed. The exact ranks and all duration samples of resources in $Q_x$ remain unrevealed. We will use the identity
\begin{equation}\label{eq:deferred-rank}
    \Prb[U_v>y\mid U_v>x]=\frac{1-y}{1-x},
    \qquad x<y<1.
\end{equation}
Thus, after conditioning on the information revealed through rank $x$, the ranks of the resources in $Q_x$ are still independent and uniform on $(x,1)$. Since the benchmark is independent of the ranks, conditioning on the benchmark does not reveal the ranks of the unprocessed resources.

The next two inequalities formalize this accounting. The first simply records benchmark requests captured at the current rank. The second compares the current resource with its benchmark schedule and subtracts exactly the benchmark mass already captured from owners that were still unprocessed.

\begin{lemma}\label{lem:owner-inequalities}
For almost every $x\in[0,1]$,
\begin{align}
    w(x)&\ge a(x)+b(x),\label{eq:owner-capture}\\
    w(x)&\ge 1-J(x).\label{eq:owner-residual}
\end{align}
\end{lemma}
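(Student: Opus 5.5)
The first inequality \eqref{eq:owner-capture} is bookkeeping. For each $v$, the schedule $S_v$ splits into the disjoint parts $S_v\cap O_{>v}$, $S_v\cap O_{\le v}$ and $S_v\setminus O$. Hence
\[
|S_v|\ge|S_v\cap O_{>v}|+|S_v\cap O_{\le v}|.
\]
Summing over $v$ with $U_v\le x$ and taking expectations, we note that the $S_v$ are pairwise disjoint, so $\sum_{U_v\le x}|S_v|=|C_x|$. Dividing by $M$ gives
\[
\int_0^x w\ge\int_0^x(a+b)
\]
for every $x$. Since this holds for every interval $[x,y]$, differentiating (Lebesgue) yields $w\ge a+b$ almost everywhere.

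For \eqref{eq:owner-residual} I would compute the density $w(x)$ as a limit. The increment $\frac1M\E|C_y\setminus C_x|$ for $y\downarrow x$ equals
\[
\frac1M\sum_v\E\bigl[\1\{U_v\in(x,y]\}\,|\Scan_v([T]\setminus C_{<v})|\bigr].
\]
Fix $v$ and condition on $U_v=s$. By independence of ranks, the other resources have independent uniform ranks, and $C_{<v}$ is determined by the resources of rank $<s$ and their duration samples. This set is independent of $v$'s own duration samples and of the benchmark execution. Applying \cref{lem:residual-comparison} with $\widetilde S_v=O_v$ and $C=C_{<v}$ gives
\[
\E\bigl[|S_v|\,\big|\,U_v=s\bigr]\ge\E\bigl[|O_v\setminus C_{<v}|\,\big|\,U_v=s\bigr].
\]
Integrating $s$ over $(x,y]$ and summing over $v$ yields
\[
M\int_x^y w\ge\int_x^y\sum_v\E\bigl[|O_v\setminus C_s|\,\big|\,U_v=s\bigr]ds,
\]
up to the measure-zero difference between $C_{<v}$ and $C_s$. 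So it remains to show, for almost every $s$,
\[
\sum_v\E\bigl[|O_v\cap C_s|\,\big|\,U_v=s\bigr]\le M\,J(s).
\]
The matching lower bound $\sum_v\E|O_v|=M$ is immediate.

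The main step, and the hard one, is expressing this lost mass through $a$. A request $t\in O_v\cap C_s$, given $U_v=s$, was captured by some resource $u$ with $U_u=r<s$. At the time of capture its owner $v$ was unprocessed, so the capture counts toward $a(r)$. Conversely, $a(r)\,dr$ counts captures at rank $r$ of benchmark requests whose owner has rank $>r$. By the deferred-rank identity \eqref{eq:deferred-rank}, conditional on everything revealed through rank $r$, the owner's rank is uniform on $(r,1)$ and has density $1/(1-r)$ at each $s>r$.

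The delicate point is that conditioning on $U_v=s$ changes the process before rank $s$, since $v$ is absent from ranks below $s$. I would handle this with a disintegration. Write
\[
\E|O_{>v}\cap S_u|\text{ summed over }U_u\le x
\]
as an integral over the pair of ranks $(r,s)$ with $r<s$. Resource $v$ with $U_v=s>r$ is unprocessed at rank $r$, and the events at ranks below $r$ do not depend on where in $(r,1)$ the rank $U_v$ lies. Hence the joint density of (capture at rank $r$, owner rank $s$) factorizes as $a(r)\cdot\frac{1}{1-r}$ for $s>r$. Integrating over $r<s$ gives exactly
\[
\sum_v\E\bigl[|O_v\cap C_s|\,\big|\,U_v=s\bigr]=M\int_0^s\frac{a(r)}{1-r}\,dr=M\,J(s).
\]
I expect the care to be needed in justifying this factorization rigorously, namely that the main execution restricted to ranks $<r$ is independent of $U_v$ given $U_v>r$. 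Combining the two displays then gives $w\ge1-J$ almost everywhere.
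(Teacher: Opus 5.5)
Your proposal is correct and follows essentially the paper's argument: compare $S_v$ with $O_v\setminus C_{<v}$ via the residual comparison after conditioning on the lower-rank execution, then use the deferred-rank identity to rewrite the lost mass through $J$. The paper phrases this with a general test function $\varphi$ rather than conditioning pointwise on $U_v=s$, which sidesteps the disintegration you flag.

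There is one small slip. On $\{U_v=s\}$ the set $C_s$ contains $S_v$ itself, so it does not differ from $C_{<v}$ by a null event. Your identity $\sum_v\E[|O_v\cap C_s|\mid U_v=s]=MJ(s)$ therefore holds only with $C_{<v}$ in place of $C_s$. That is exactly what your count of captures by resources $u$ with $U_u<s$ computes, so keep $C_{<v}$ throughout.
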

\begin{proof}
The first inequality follows because the two sets counted by $a$ and $b$ are disjoint subsets of the main schedule.

For the second inequality, we use a test-function argument to turn the resource-wise comparison into a bound on the rank density. Fix a bounded nonnegative measurable function $\varphi$ on $[0,1]$.

Consider a resource $v$, and condition on its rank $U_v$ and on everything revealed before $v$ is processed. Then $C_{<v}$ is fixed, while no duration sample of $v$ has been revealed, and the benchmark remains independent. Since
\[
    S_v=\Scan_v([T]\setminus C_{<v}),
\]
\cref{lem:residual-comparison} gives, under this conditioning,
\[
    \E[|S_v|]
    \ge
    \E[|O_v\setminus C_{<v}|].
\]
Moreover, $\varphi(U_v)$ is now fixed and nonnegative. Multiplying by it and then averaging over the conditioning gives
\[
    \E\bigl[\varphi(U_v)|S_v|\bigr]
    \ge
    \E\bigl[\varphi(U_v)|O_v\setminus C_{<v}|\bigr].
\]

Summing over $v$ and dividing by $M$, we obtain
\[
    \frac1M\E\sum_v\varphi(U_v)|S_v|
    \ge
    \frac1M\E\sum_v\varphi(U_v)|O_v|
    -\frac1M\E\sum_v\varphi(U_v)|O_v\cap C_{<v}|.
\]
By the definition of $w$, the left-hand side is
\[
    \int_0^1\varphi(x)w(x)\,dx.
\]
Since the benchmark is independent of the ranks and each $U_v$ is uniform on $[0,1]$,
\[
    \frac1M\E\sum_v\varphi(U_v)|O_v|
    =\int_0^1\varphi(x)\,dx.
\]
Hence
\begin{equation}\label{eq:test-owner}
    \int_0^1\varphi(x)w(x)\,dx
    \ge
    \int_0^1\varphi(x)\,dx-\mathcal L(\varphi),
\end{equation}
where
\[
    \mathcal L(\varphi)
    =
    \frac1M\E\sum_v
    \varphi(U_v)|O_v\cap C_{<v}|.
\]

It remains to identify this loss term. Every request counted in $\mathcal L(\varphi)$ was captured at some earlier rank $s$ while its owner was still unprocessed, and is therefore counted by $a(s)$. Conditional on the information revealed through rank $s$, the owner's rank is uniform on $(s,1)$. Thus its expected $\varphi$-weight is
\[
    \frac1{1-s}\int_s^1\varphi(x)\,dx,
\]
and therefore
\begin{align*}
    \mathcal L(\varphi)
    &=\int_0^1\frac{a(s)}{1-s}
        \left(\int_s^1\varphi(x)\,dx\right)ds\\
    &=\int_0^1\varphi(x)J(x)\,dx.
\end{align*}
Substituting this into \eqref{eq:test-owner} gives
\[
    \int_0^1
    \varphi(x)\bigl(w(x)-1+J(x)\bigr)\,dx
    \ge0
\]
for every bounded nonnegative measurable $\varphi$. Hence
\[
    w(x)\ge1-J(x)
\]
for almost every $x$ (otherwise take $\varphi$ to be the indicator of a set of positive measure on which the inequality fails). This proves \eqref{eq:owner-residual}.
\end{proof}

\begin{corollary}\label{cor:baseline}
Writing $\beta=\int_0^1b(x)\,dx$, we have
\begin{equation}\label{eq:baseline}
    W\ge\frac{1+\beta}{2}.
\end{equation}
\end{corollary}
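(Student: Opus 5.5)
The plan is to integrate the two pointwise inequalities of \cref{lem:owner-inequalities} over $x\in[0,1]$ and add them. Write $A=\int_0^1 a(x)\,dx$; then $\beta=\int_0^1 b(x)\,dx$, and $W=\int_0^1 w(x)\,dx$ by \eqref{eq:W-density}.

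\emph{First inequality.} Integrating \eqref{eq:owner-capture} gives
\[
W\ge A+\beta .
\]
This needs no further work, since all three densities are bounded.

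\emph{Second inequality.} Integrating \eqref{eq:owner-residual} gives
\[
W\ge 1-\int_0^1 J(x)\,dx .
\]
The only step needing care is computing $\int_0^1 J$. The densities $a$ and $b$ are nonnegative almost everywhere, because their cumulative integrals are expectations of counts that are nondecreasing in $x$. Hence the integrand $a(s)/(1-s)$ in \eqref{eq:J-def} is nonnegative, and $J$ is a nonnegative nondecreasing function on $[0,1)$. So Tonelli's theorem applies with no integrability assumption:
\[
\int_0^1 J(x)\,dx=\int_0^1\frac{a(s)}{1-s}\left(\int_s^1 dx\right)ds=\int_0^1 a(s)\,ds=A .
\]
The factor $1-s$ from the inner integral exactly cancels the $1/(1-s)$ weight. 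In particular, $\int_0^1 J$ is finite even though $J(x)$ may blow up as $x\to1$. Therefore
\[
W\ge 1-A .
\]

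\emph{Combining.} Adding the two integrated bounds cancels $A$:
\[
2W\ge (A+\beta)+(1-A)=1+\beta ,
\]
which is \eqref{eq:baseline}.

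The main (mild) obstacle is justifying the exchange of integration order near $x=1$, where $J$ may be unbounded. Nonnegativity of $a$ and Tonelli's theorem handle this directly. One also uses that the inequalities of \cref{lem:owner-inequalities} hold only almost everywhere, which is harmless under integration.
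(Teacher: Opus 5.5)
Your proposal is correct and takes the same approach as the paper: integrate both inequalities of \cref{lem:owner-inequalities}, use $\int_0^1 J(x)\,dx=\int_0^1 a(s)\,ds$, and average the two bounds. The paper uses that identity without comment; your Tonelli justification, relying on nonnegativity of $a$ to handle possible blow-up of $J$ near $x=1$, fills in that step correctly.
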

\begin{proof}
Integrating the two inequalities in \cref{lem:owner-inequalities} gives
\[
    W\ge\int_0^1a(s)\,ds+\beta,
    \qquad
    W\ge 1-\int_0^1a(s)\,ds.
\]
Average the two bounds.
\end{proof}

If $\beta=\int_0^1 b(x)\,dx$ is large, \cref{cor:baseline} already gives a strict improvement. The difficult case is therefore when $\beta$ is small. In that case, the owner accounting by itself loses most of the benefit from requests captured before their owners, because those gains are later charged back to the owners' benchmark schedules. The next subsection recovers part of this information by using where in the ranking these early captures occur.

\subsection{Comparison across a Rank Cut}\label{sec:rank-cut}

Our next goal is to strengthen the preceding bounds when few benchmark requests are served at or after their owners. We compare the main execution with an independent execution on resources beyond a rank threshold, and obtain an additional lower bound on the main execution's gains after that threshold.

Fix $p\in(0,1)$ and set $q=1-p$. By rank $p$, the main execution has processed all resources in $P_p$, including the owners of the requests in $O_{P_p}$. Thus, any request in $O_{P_p}$ captured after the cut is a late capture and creates no future loss for its owner. To estimate how much of this benchmark mass is still reachable, run an independent \Ranking{} execution on the remaining resources $Q_p$ and ask how many requests in $O_{P_p}\setminus C_p$ it can serve.

The quantity $qD(p)$ below is the lower bound obtained by taking the auxiliary coverage of $O_{P_p}$ and subtracting the part already covered by $C_p$. Define
\begin{equation}\label{eq:D-def}
    D(p)=\frac1q\left(
      pJ(q)-\int_0^p\frac{p-s}{1-s}a(s)\,ds
             -\int_0^p b(s)\,ds\right).
\end{equation}
Here $pJ(q)$ will be the normalized auxiliary coverage of $O_{P_p}$, while the two integrals will be the normalized amount of $O_{P_p}$ already contained in $C_p$.

\begin{lemma}\label{lem:rank-cut}
For every $p\in(0,1)$ and almost every $x\in[p,1)$,
\begin{equation}\label{eq:rank-cut}
    w(x)\ge D(p)-\int_p^x\frac{b(s)}{1-s}\,ds.
\end{equation}
\end{lemma}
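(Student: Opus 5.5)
The plan is to mimic the proof of \eqref{eq:owner-residual}, with the benchmark schedule $O_v$ replaced by an auxiliary schedule that lives only on the unprocessed resources $Q_p$. Condition on everything the main execution reveals through rank $p$. This fixes $P_p$, $Q_p$, $C_p$ and the benchmark, while the main ranks of resources in $Q_p$ stay i.i.d.\ uniform on $(p,1)$ by \eqref{eq:deferred-rank}. Now run an independent \Ranking{} execution restricted to $Q_p$, with fresh ranks and fresh duration samples. Let $A_v$ be its schedule for $v\in Q_p$, and call $v$ the \emph{auxiliary owner} of the requests in
\[
  R_v = A_v\cap\bigl(O_{P_p}\setminus C_p\bigr).
\]
The proof then has two steps: a lower bound on $\E\sum_v|R_v|$, and a rank-by-rank comparison of $S_v$ with $R_v$.

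\emph{Step 1: $\frac1M\E\sum_{v\in Q_p}|R_v|\ge qD(p)$.} I would use a distributional identification. In a fresh \Ranking{} execution on all of $V$ with ranks $U'$, the sets $\{U'\le q\}$ and $\{U'>q\}$ have the same joint law as $(Q_p,P_p)$. Indeed, each resource falls in the first set independently with probability $q$, and the benchmark is independent of both. In addition, \Ranking{} restricted to $\{U'\le q\}$ is exactly the first part of the resource-by-resource execution of \cref{lem:resource-ordered}. The normalized number of requests it captures whose owners lie in $\{U'>q\}$ is therefore computed as in the proof of \cref{lem:owner-inequalities}. A request counted by $a(s)$ has owner rank uniform on $(s,1)$, and that owner exceeds $q$ with probability $p/(1-s)$. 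This gives total mass $\int_0^q a(s)\,p/(1-s)\,ds=pJ(q)$.

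From this mass I subtract the part of $O_{P_p}$ already in $C_p$. A request captured at main rank $s\le p$ lies in $O_{P_p}$ in one of two ways. Either it is a late capture, contributing $b(s)$, whose owner has rank $\le s\le p$. Or it is an early capture whose owner rank falls in $(s,p]$, which happens with conditional probability $(p-s)/(1-s)$. Hence $\frac1M\E|O_{P_p}\cap C_p|$ equals exactly the two integrals in \eqref{eq:D-def}. Since $\E|A\setminus B|\ge\E|A|-\E|B|$, Step 1 follows.

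\emph{Step 2: the test-function comparison.} For $v\in Q_p$, condition further on $U_v$ and on everything revealed before $v$ is processed. The auxiliary execution is nonanticipatory and independent of the main execution and of $v$'s scan samples. I would apply \cref{lem:residual-comparison} with comparison schedule $A_v$ and removed set $C_{<v}\cup\bigl([T]\setminus(O_{P_p}\setminus C_p)\bigr)$. This removed set is fixed by the main information and independent of the auxiliary run, so the lemma gives
\[
  \E|S_v|\ge\E|R_v\setminus C_{<v}|.
\]
Multiply by $\varphi(U_v)$ for a test function $\varphi$ supported on $[p,1)$ and sum over $v$. The gain term is the following: the auxiliary run is independent of the main ranks on $Q_p$, which are uniform on $(p,1)$, so the $\varphi$-weighted mass of $R$ is $\frac{qD(p)}{q}\int_p^1\varphi=D(p)\int_p^1\varphi$ or more.

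The loss term consists of requests of $R_v$ captured by main resources at ranks $s\in(p,U_v)$. Such requests have owner in $P_p$, so they are late captures and are counted by $b(s)$. Their auxiliary owner is still unprocessed at rank $s$, so its main rank is uniform on $(s,1)$. Hence the loss is at most $\int_p^1\frac{b(s)}{1-s}\int_s^1\varphi(x)\,dx\,ds$; this is an inequality because $b$ counts all late captures, not only those in $R$. Exchanging the order of integration and using arbitrary $\varphi$ yields \eqref{eq:rank-cut} almost everywhere, exactly as at the end of the proof of \cref{lem:owner-inequalities}.

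\emph{Main obstacle.} The delicate point is the independence bookkeeping. The auxiliary execution runs on the random set $Q_p$, and its target set $O_{P_p}\setminus C_p$ depends on main information. We must check three things. First, conditioning on the main history through rank $p$ keeps the auxiliary run independent. Second, it leaves the main ranks of $Q_p$ uniform on $(p,1)$. Third, the distributional identification in Step 1 legitimately transfers $pJ(q)$, a quantity defined through the main densities $a$, to an execution on $Q_p$. I would first verify Step 1 by averaging over $P_p$, because the identity holds only in expectation. The inequality in Step 2 is then integrated against that expectation after conditioning on rank-$p$ information.
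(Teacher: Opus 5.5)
Your proposal is correct. Step~1 matches the paper's argument: the same independent \Ranking{} run on $Q_p$, the same embedding into the first $q$ fraction of a fresh order to obtain $pJ(q)$, and the same subtraction of $\frac1M\E|O_{P_p}\cap C_p|$.

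Step~2 is organized differently. The paper introduces the remaining useful mass $Y(x)=\frac1M\E|(O_P\setminus C_x)\cap\bigcup_{U_v>x}\widehat S_v|$. It proves the rate bound $w(x)\ge Y(x)/(1-x)$, and then a separate evolution inequality $Y(y)/(1-y)\ge Y(x)/(1-x)-\int_x^y b(s)/(1-s)\,ds$. That evolution step needs a survival-probability argument and a case split on whether the main resource covering a request is its own auxiliary resource.

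You instead compare each $v$ directly with $R_v\setminus C_{<v}$ and run a single test-function computation, exactly as in the proof of \eqref{eq:owner-residual}. The gain term factors because $U_v$ is uniform on $(p,1)$ independently of $R_v$ given the rank-$p$ information and the auxiliary run. The loss $R_v\cap C_{<v}$ is charged to $b(s)$ with weight $\frac1{1-s}\int_s^1\varphi$. The paper's awkward case disappears automatically, because $v$'s own captures never lie in $C_{<v}$.

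Your version is shorter and reuses one template throughout. The paper's potential formulation is the one reused in \cref{app:submodular}. There the auxiliary contribution is the submodular quantity $R_P(A_x,\widehat A_{>x})$ rather than a disjoint union of per-resource sets, so per-element charging of the kind you use is not directly available.

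One small point. Applying \cref{lem:residual-comparison} with the enlarged removed set $C_{<v}\cup\bigl([T]\setminus(O_{P_p}\setminus C_p)\bigr)$ bounds $\E|\Scan_v(\cdot)|$ on the smaller request set, not $\E|S_v|$. You would therefore also need monotonicity of the scan in its request set; this holds pathwise by the same induction, but you do not state it. It is simpler to apply the lemma with $C=C_{<v}$ and then use $R_v\setminus C_{<v}\subseteq A_v\setminus C_{<v}$.
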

\begin{proof}
Write $P=P_p$ and $Q=Q_p$. On the resources in $Q$, run an independent copy of \Ranking{} with an independent uniform order and independent duration samples. Let $\widehat S_v$ be its schedule for $v\in Q$ and let $\widehat C=\bigcup_{v\in Q}\widehat S_v$.

\emph{Useful auxiliary coverage at the cut.}
We first estimate how many benchmark requests owned by $P$ are served by the auxiliary execution, and then subtract those already in $C_p$. To express the first quantity using the rank densities $a$ and $b$, we embed the auxiliary execution on $Q$ into the first $q$ fraction of a fresh full \Ranking{} order. Each resource belongs to $Q$ independently with probability $q$. Give resources in $Q$ independent ranks $R_v\sim\mathrm{Unif}[0,1]$ for the auxiliary order, give resources in $P$ independent ranks $R'_v\sim\mathrm{Unif}[0,1]$, and define
\[
    \widetilde U_v=
    \begin{cases}
        qR_v,&v\in Q,\\
        q+pR'_v,&v\in P.
    \end{cases}
\]
The variables $\widetilde U_v$ are independent uniforms on $[0,1]$. Conditional on $Q$, their relative order on $Q$ is uniform, and by \cref{lem:resource-ordered} the auxiliary execution depends only on this order and the duration samples of resources in $Q$. Complete the construction by assigning independent duration sequences to the resources in $P$. The resulting full \Ranking{} execution has the same law as the main execution. Its resources with new rank at most $q$ produce exactly the auxiliary execution, and $P$ is exactly the set of resources whose new rank exceeds $q$. Since this full execution has the same law as the main execution, its capture densities are again $a$ and $b$. A benchmark request counted by $a(s)$ at rank $s\le q$ has an owner whose rank is uniform on $(s,1)$, so the owner lies after $q$ with conditional probability $p/(1-s)$. A request counted by $b(s)$ already has its owner at rank at most $s\le q$ and therefore contributes zero. Hence
\begin{equation}\label{eq:aux-OP}
    \frac1M\E|O_P\cap\widehat C|
    =p\int_0^q\frac{a(s)}{1-s}\,ds=pJ(q).
\end{equation}
For $C_p$, every request counted by $b(s)$ at $s\le p$ belongs to $O_P$. If it is counted by $a(s)$, its owner is uniform on $(s,1)$ and lies in $P$ with conditional probability $(p-s)/(1-s)$. Hence
\begin{equation}\label{eq:main-OP}
    \frac1M\E|O_P\cap C_p|
    =\int_0^p b(s)\,ds+
      \int_0^p\frac{p-s}{1-s}a(s)\,ds.
\end{equation}
Using
\[
    |(O_P\setminus C_p)\cap\widehat C|
      \ge |O_P\cap\widehat C|-|O_P\cap C_p|,
\]
we obtain
\begin{equation}\label{eq:Y-initial}
    \frac1M\E|(O_P\setminus C_p)\cap\widehat C|
    \ge qD(p).
\end{equation}

\emph{Turning auxiliary coverage into main gain.}
Past the cut, we keep only auxiliary assignments whose resource has not yet been processed in the main order, since only those resources can still serve as comparators for the current main resource. For $x\ge p$, define
\[
    Y(x)=\frac1M\E\left|
      (O_P\setminus C_x)\cap
      \bigcup_{v:U_v>x}\widehat S_v\right|.
\]
Thus $Y(x)$ is the normalized expected amount of useful auxiliary coverage still available at rank $x$: the request is unserved by the main execution, and its auxiliary resource is still unprocessed in the main order. Equation~\eqref{eq:Y-initial} gives $Y(p)\ge qD(p)$.

The quantity $Y(x)$ is a remaining mass, whereas $w(x)$ is a gain density at rank $x$. Since every unprocessed resource has conditional rank density $1/(1-x)$ on $(x,1)$, the natural comparison rate is $Y(x)/(1-x)$. We now formalize this comparison.

Condition on $Q$ and on the information revealed by the main execution through rank $x$. For each unprocessed resource $v$, its exact main rank is still uniform on $(x,1)$, and none of its main duration samples has been revealed. Conditional on $Q$, the auxiliary execution is independent of this revealed information; we do not condition on its realized durations. By \cref{lem:residual-comparison}, the expected gain from processing $v$ next is at least $\E|\widehat S_v\setminus C_x|$ under this conditioning, and hence at least the contribution from $O_P\setminus C_x$. Summing over unprocessed resources, the restricted auxiliary contributions add up to exactly $M Y(x)$ because the schedules $\widehat S_v$ are pairwise disjoint. Multiplying by the conditional rank density $1/(1-x)$, dividing by $M$, and averaging gives

\begin{equation}\label{eq:gain-from-Y}
    w(x)\ge \frac{Y(x)}{1-x}.
\end{equation}
Equivalently, \eqref{eq:gain-from-Y} can be integrated against any bounded nonnegative measurable function supported in $(p,1)$.

\emph{Loss as the rank increases.}
The mass $Y(x)$ decreases even if the main execution serves no additional useful request, simply because auxiliary resources are gradually processed in the main order. Dividing by $1-x$ removes this proportional loss. We now record the remaining loss over a finite rank interval. For every $p\le x<y<1$,
\begin{equation}\label{eq:Y-evolution}
    \frac{Y(y)}{1-y}
    \ge \frac{Y(x)}{1-x}
           -\int_x^y\frac{b(s)}{1-s}\,ds.
\end{equation}

To prove \eqref{eq:Y-evolution}, condition on the benchmark, the auxiliary execution, and the information revealed by the main execution through rank $x$. Consider a request counted by $Y(x)$, and let its auxiliary resource be the unique resource that serves it in the auxiliary execution. If the main execution does not cover this request between ranks $x$ and $y$, then it remains counted at rank $y$ whenever its auxiliary resource has main rank greater than $y$. Since that resource is unprocessed at rank $x$, this happens with conditional probability $(1-y)/(1-x)$. This gives the proportional survival term.

It remains to bound the additional loss caused by the main execution covering such a request at some rank $s\in(x,y]$. The request belongs to $O_P$, so its owner has rank at most $p\le s$ and the capture is counted by $b(s)$. If the main resource that covers the request is also its auxiliary resource, then the auxiliary resource has rank $s\le y$; the request already fails the survival event above and requires no additional charge. Otherwise, for an additional loss to remain at rank $y$, the auxiliary resource must still be unprocessed at rank $s$ and must have main rank greater than $y$. Conditional on being unprocessed at rank $s$, its exact main rank is uniform on $(s,1)$, so this probability is $(1-y)/(1-s)$. Since each request is covered by the main execution at most once, the total additional normalized loss is at most
\[
    (1-y)\int_x^y\frac{b(s)}{1-s}\,ds.
\]
Combining this with the proportional survival term proves \eqref{eq:Y-evolution}.

Apply \eqref{eq:Y-evolution} from $p$ to $x$, use $Y(p)\ge qD(p)$ and $q=1-p$, and then use \eqref{eq:gain-from-Y}. This gives \eqref{eq:rank-cut}. Thus the only loss beyond the proportional disappearance of auxiliary resources is paid for by the same $b$-mass that creates surplus in \cref{cor:baseline}.
\end{proof}

\begin{proof}[Completion of the proof of \cref{thm:main}]
\Cref{lem:owner-inequalities,lem:rank-cut} give three lower bounds on $w$. \Cref{lem:analytic-bound} combines them with nonnegative weights, canceling the $a$-terms and leaving nonnegative $b$-terms. It follows that
\[
    \frac{\E[\ALG]}M=\int_0^1 w(x)\,dx
       \ge\frac{5-2\sqrt3}{3}.
\]
\end{proof}

\section{Weighted Resources}\label{sec:weighted}

We now turn to weighted resources and prove the black-box reduction of \cref{thm:weighted-reduction}.

Let $M=\E[\OPT]$ be the value of the optimal weighted benchmark. Assume that \Ranking{} is $\alpha$-competitive on every unweighted instance of the model for some $\alpha>1/2$. We use the unweighted guarantee only through the value of $\alpha$; the argument below does not rely on how that guarantee is proved. By \cref{thm:main}, we may take $\alpha=\rhostar$.

We start from weighted greedy, which provides a $1/2$-competitive baseline, and use the unweighted guarantee only as a sparse improvement on top of it. Since the unweighted guarantee does not distinguish reward magnitudes, we invoke it only within groups of resources with comparable weights. Most groups continue to follow weighted greedy, while a small random fraction run \Ranking{}.

This creates two sources of loss. An active resource may take a request that weighted greedy would have assigned to an inactive resource; we control this by restricting cross-bucket edges. Also, a request useful to one active bucket may already have been taken by a higher active bucket; sparse activation limits this interference. The analysis below bounds the first loss by the slack of weighted greedy and the second by the reward earned by the active buckets themselves.

\subsection{Weighted-Greedy Baseline}

Fix a strict priority order $\prec$ that lists resources in nonincreasing order of $c_v$, with deterministic tie-breaking. Run an auxiliary weighted-greedy execution with this priority order and its own duration samples. At each request $t$, let $g(t)$ be the resource selected by this execution, with $g(t)=\bot$ if it rejects. Write $\mathcal G_v$ for the schedule of resource $v$, and let
\[
    G=\E\sum_{v\in V}c_v|\mathcal G_v|
\]
be its expected reward.

Run the benchmark independently and use the owner notation from \cref{sec:owners}: $O_v$ is the benchmark schedule of resource $v$, $O=\bigcup_v O_v$, and $o(t)$ is the owner of request $t\in O$. For every request served by both executions, compare the greedy resource $g(t)$ with the benchmark owner $o(t)$ in the priority order $\prec$. Define
\[
    \mathcal E=\{t\in O:g(t)\ne\bot,\ g(t)\prec o(t)\},
    \qquad
    \mathcal L=\{t\in O:g(t)\ne\bot,\ g(t)\not\prec o(t)\}.
\]
For $t\in\mathcal E$, greedy uses a resource of reward at least $c_{o(t)}$; for $t\in\mathcal L$, it uses a resource of reward at most $c_{o(t)}$. Define
\begin{equation}\label{eq:greedy-errors}
    \Delta=\E\sum_{t\in\mathcal E}(c_{g(t)}-c_{o(t)}),
    \qquad
    B=\E\sum_{t\in\mathcal L}c_{g(t)},
    \qquad
    \sigma=2G-M.
\end{equation}
The quantity $\sigma=2G-M$ measures the slack in the $1/2$ guarantee of weighted greedy. The next lemma shows that the two comparison terms $\Delta$ and $B$ are together bounded by this same slack. This is the estimate we will later use to control the filtering loss.

\begin{lemma}\label{lem:greedy-bound}
The weighted-greedy execution satisfies
\begin{equation}\label{eq:greedy-bound}
    \sigma\ge\Delta+B\ge0.
\end{equation}
In particular, $G\ge M/2$.
\end{lemma}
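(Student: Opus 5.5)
The plan is a charging argument over benchmark requests, split by the three cases for $t\in O$: greedy rejects $t$, $t\in\mathcal E$, or $t\in\mathcal L$. Write
\[
    M=\E\sum_{t\in O}c_{o(t)}
      =\E\sum_{t\in\mathcal E}c_{o(t)}+\E\sum_{t\in O\setminus\mathcal E}c_{o(t)}.
\]
For the first sum, the definition of $\Delta$ gives directly
\[
    \E\sum_{t\in\mathcal E}c_{o(t)}=\E\sum_{t\in\mathcal E}c_{g(t)}-\Delta .
\]
The second sum is the main step. I will bound it by $G$ using the residual schedule comparison.

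Weighted greedy is fixed-priority greedy with order $\prec$. By \cref{lem:resource-ordered}, applied after fixing all of greedy's duration samples, the schedule of each $v$ satisfies $\mathcal G_v=\Scan_v([T]\setminus C_{\prec v})$. Here $C_{\prec v}=\bigcup_{u\prec v}\mathcal G_u$, and $v$ uses its own greedy duration samples in the scan.

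The set $C_{\prec v}$ depends only on the duration samples of resources strictly above $v$ in $\prec$. It is therefore independent of $v$'s greedy samples and of the independently run benchmark. So \cref{lem:residual-comparison} applies with comparison policy the benchmark and $C=C_{\prec v}$, giving
\[
    \E|\mathcal G_v|\ge\E|O_v\setminus C_{\prec v}|.
\]
A request $t\in O_v$ lies in $C_{\prec v}$ exactly when $g(t)\ne\bot$ and $g(t)\prec v=o(t)$, that is, when $t\in\mathcal E$. Hence $O_v\setminus C_{\prec v}=O_v\setminus\mathcal E$. Multiplying by $c_v$ and summing over $v$ yields
\[
    G\ge\E\sum_{t\in O\setminus\mathcal E}c_{o(t)}.
\]
The step I expect to need the most care is checking this independence, i.e.\ that conditioning on $C_{\prec v}$ leaves $v$'s samples fresh and the benchmark untouched. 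It holds because greedy is processed resource by resource and uses samples separate from the benchmark's.

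Next, $\mathcal E$ and $\mathcal L$ are disjoint sets of requests served by greedy, so
\[
    \E\sum_{t\in\mathcal E}c_{g(t)}+B\le G .
\]
Combining the three displays gives
\[
    M\le(G-B)-\Delta+G,
\]
which rearranges to $\sigma=2G-M\ge\Delta+B$.

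Finally, $\Delta\ge0$ because $g(t)\prec o(t)$ implies $c_{g(t)}\ge c_{o(t)}$ under the nonincreasing-weight order. Also $B\ge0$ because rewards are nonnegative. Therefore $\sigma\ge0$, i.e.\ $G\ge M/2$.
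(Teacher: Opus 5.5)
Your proposal is correct and follows essentially the same route as the paper: apply the residual schedule comparison to each $v$ against $C_{\prec v}$ to get $G\ge M-\E\sum_{t\in\mathcal E}c_{o(t)}$, then split greedy's reward over $\mathcal E$, $\mathcal L$, and the remaining requests to conclude $\sigma\ge\Delta+B$ (the paper writes the leftover reward as $N\ge0$, where you use an inequality). Your explicit check that $\Delta\ge0$ and $B\ge0$ supplies the nonnegativity, which the paper leaves implicit.
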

\begin{proof}
For each resource $v$, let
\[
    C_v^G=\bigcup_{u\prec v}\mathcal G_u
\]
be the requests already taken by higher-priority greedy resources before $v$ is processed in the resource-by-resource representation. Applying \cref{lem:residual-comparison} to $v$ and its benchmark schedule $O_v$ gives
\[
    \E[c_v|\mathcal G_v|]
    \ge \E[c_v|O_v\setminus C_v^G|].
\]
A benchmark request $t\in O_v$ is removed from $O_v\setminus C_v^G$ exactly when greedy serves $t$ using some resource $u\prec v$, which is precisely the event $t\in\mathcal E$. Summing over $v$ therefore gives
\[
    G\ge M-A,
    \qquad
    A:=\E\sum_{t\in\mathcal E}c_{o(t)}.
\]

Next split the greedy reward according to whether the served request lies in $\mathcal E$, in $\mathcal L$, or outside $O$. On $\mathcal E$, greedy earns $A+\Delta$ in expectation; on $\mathcal L$, it earns $B$; and let $N\ge0$ be its expected reward on requests outside $O$. Thus
\[
    G=A+\Delta+B+N.
\]
Combining the two displays,
\[
    2G-M\ge G-A=\Delta+B+N\ge\Delta+B.
\]
This proves the lemma.
\end{proof}

Thus $\Delta+B$ is small whenever weighted greedy is close to its worst-case value $M/2$. If weighted greedy has larger slack, that slack can absorb a larger filtering loss. This is the link between the greedy baseline and the filtering rule used below.

\subsection{Random Weight Buckets and the Algorithm}\label{random_weight_buckets}

We now describe how the algorithm departs from weighted greedy on a small random set of weight buckets. The bucket ratio $\gamma$ controls the range of rewards within one bucket, the activation probability $\eta$ controls how often a bucket runs \Ranking{}, and $\theta$ limits how much greedy reward an active resource may displace across buckets. We choose
\[
    1<\gamma<2,
    \qquad 0<\eta<1,
    \qquad \theta=\gamma-1.
\]

\paragraph{Randomly shifted buckets.}
Draw $\zeta\sim\mathrm{Unif}[0,1)$ independently of the weighted-greedy execution and define, for every integer $i$,
\begin{equation}\label{eq:buckets}
    \ell_i=\gamma^{i+\zeta},
    \qquad
    \mathcal B_i=\{v:\ell_i\le c_v<\gamma\ell_i\}.
\end{equation}
For $x>0$, let $\bar x=\ell_i$ for the unique $i$ satisfying $\ell_i\le x<\gamma\ell_i$. Thus every resource $v\in\mathcal B_i$ has bucket index $i(v)=i$ and rounded reward $\bar c_v=\ell_i$. Only finitely many buckets are nonempty.

Without the random shift, two nearly equal rewards could always fall on opposite sides of a fixed bucket boundary. The shift makes the expected loss from such a separation proportional to the difference between the rewards; \cref{lem:rounding} makes this precise.

\paragraph{Active buckets and filtering.}
Independently activate each nonempty bucket by a bit $I_i\sim\mathrm{Bernoulli}(\eta)$. Within each active bucket, draw an independent uniform order of its resources. Active resources are ordered first by decreasing bucket index and then by this within-bucket order. Thus each active bucket has the random resource order required by unweighted \Ranking{}, while higher active buckets have priority over lower ones.

An active resource is allowed to take precedence over weighted greedy, but not on every edge. To limit the greedy reward lost across buckets, for an active resource $v$ we keep the edge $(v,t)$ only if
\begin{equation}\label{eq:eligible}
    g(t)=\bot,
    \qquad\text{or}\qquad i(g(t))=i(v),
    \qquad\text{or}\qquad c_{g(t)}<\theta c_v.
\end{equation}
The three cases correspond to three situations. If greedy rejects, taking the request loses no greedy reward. If the two resources lie in the same bucket, we keep the edge so that \Ranking{} sees the full within-bucket instance. Across different buckets, an active resource may displace greedy only when the greedy reward is less than a $\theta$ fraction of the active reward.

\paragraph{Online execution.}
When request $t$ arrives, first determine $g(t)$ from the current state of the auxiliary weighted-greedy execution. If some available active resource has a kept edge to $t$, assign $t$ to the highest-priority such resource. Otherwise, if $g(t)\ne\bot$ and its bucket is inactive, follow the greedy assignment. If neither case applies, reject $t$.

To determine future choices $g(t)$, the auxiliary greedy execution must also be updated when its current assignment is not followed by the actual algorithm. This is the only place where the sampling oracle is needed. If an inactive greedy assignment is followed, the actual use and the corresponding greedy use share the same realized duration. Every other greedy use receives an independent oracle sample. The complete procedure is summarized in \cref{alg:weighted}.

\begin{algorithm}[H]
\caption{Weighted Algorithm}
\label{alg:weighted}
\begin{algorithmic}[1]
\State Initialize the auxiliary weighted-greedy execution with priority order $\prec$.
\State Sample the random shift, independently activate each nonempty bucket, and draw an independent uniform order within each active bucket.
\For{each arriving request $t$}
    \State Let $g(t)$ be the choice of the auxiliary weighted-greedy execution.
    \State Let $A_t$ be the available active resources whose edge to $t$ is kept by \eqref{eq:eligible}.
    \If{$A_t\ne\varnothing$}
        \State Assign $t$ to the highest-priority resource in $A_t$.
    \ElsIf{$g(t)\ne\bot$ and the bucket of $g(t)$ is inactive}
        \State Assign $t$ to $g(t)$.
    \Else
        \State Reject $t$.
    \EndIf
    \State If an inactive greedy assignment is followed, use the same duration for the actual and greedy uses; for every other greedy use, draw a fresh oracle sample.
\EndFor
\end{algorithmic}
\end{algorithm}

The next lemma verifies that this algorithm is valid and gives the form needed later to analyze the active buckets.

\begin{lemma}\label{lem:weighted-implementation}
The weighted algorithm is feasible and nonanticipatory. Conditional on the weighted-greedy execution, the random shift, the bucket activations, and the within-bucket orders, the active part is fixed-priority greedy on the kept-edge graph.
\end{lemma}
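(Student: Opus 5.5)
We have to establish three things about \cref{alg:weighted}: it never assigns a request to a busy resource, it never uses a duration before committing to the use, and, once all the auxiliary randomness is fixed, its active part is an ordinary fixed-priority greedy. The argument goes by induction over the arrival order. Alongside the induction we maintain an invariant relating the actual state of each resource to the auxiliary weighted-greedy state.

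\emph{Feasibility.} Call a resource \emph{active} if its bucket is active and \emph{inactive} otherwise. The invariant we maintain is that, at every time, each inactive resource $u$ has the same availability status in the actual execution and in the auxiliary greedy execution. This holds because inactive resources are used in the actual execution only when the algorithm follows $g(t)=u$. In that case the greedy execution also uses $u$ at $t$, and the two uses share the same realized duration. Conversely, if greedy uses an inactive $u$ at $t$ and the algorithm does not follow, then the algorithm assigned $t$ to an active resource, so the two states diverge.

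We therefore weaken the invariant to a one-sided statement: every inactive resource that is available to greedy is also available to the actual algorithm. The actual algorithm uses inactive resources only on a subset of greedy's uses, and it always uses the same durations. An induction on the successive busy periods then shows that the actual busy set of each inactive $u$ is contained in greedy's busy set. Hence, when the algorithm follows $g(t)$, the resource $g(t)$ is available to greedy and therefore available in reality. Active resources are selected only from $A_t$, which by definition consists of available resources. This gives feasibility.

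\emph{Nonanticipation.} When request $t$ arrives, the auxiliary greedy choice $g(t)$ is determined by greedy's past durations, and these are past actual durations or past oracle samples. The kept-edge rule \eqref{eq:eligible} depends only on $g(t)$ and on the bucket data. The availability of active resources depends only on their past durations. So every decision is made before the duration of the current use is revealed. The shared duration in the followed case is revealed only after the commitment, and fresh oracle samples do not depend on the actual future durations.

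\emph{Structure of the active part.} Condition on the greedy execution (all $g(t)$), the shift $\zeta$, the activations, and the within-bucket orders. We must first check that this conditioning is legitimate, and this is the main obstacle. The concern is that $g(\cdot)$ might depend on the active resources' durations. It does not. Greedy's durations are either shared with a followed inactive use or fresh oracle samples, and active uses never share durations with greedy. Moreover, greedy's evolution does not depend on the actual decisions beyond which samples are shared, and the shared samples have the same law as fresh ones. So the greedy execution can be generated first, and it is independent of the active resources' duration sequences.

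Under this conditioning the kept-edge graph is fixed. The algorithm serves $t$ actively exactly when some available active resource has a kept edge to $t$, and then it picks the highest-priority such resource. Whether an inactive resource takes $t$ depends only on $A_t$ being empty, so the inactive resources never affect the availability of active ones. The active subsystem is therefore precisely chronological fixed-priority greedy on the kept-edge graph. Its priority order is by decreasing bucket index and then by the within-bucket order, and it uses the active resources' own i.i.d.\ durations. This proves the final claim.
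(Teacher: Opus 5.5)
Your proof is correct and follows essentially the same route as the paper. Feasibility comes from the one-sided containment of an inactive resource's actual busy intervals in its greedy busy intervals. Independence comes from every greedy use receiving a fresh sample, either shared or from the oracle. Fixing the greedy execution, the shift, the activations, and the within-bucket orders then determines the kept-edge graph and the active priority order.

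For a clean write-up, delete the initial two-sided availability invariant, which you yourself observe is false. Also base the independence step on the fact that the choice between a shared and an oracle sample is made before that duration is revealed, not only on the shared samples having the same marginal law.
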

\begin{proof}
Generate the duration samples of the auxiliary greedy execution independently of the shift, activation bits, within-bucket orders, and active-use durations. Whenever the actual algorithm follows an inactive greedy assignment, couple the corresponding greedy use to the same duration; every other greedy use receives an independent oracle sample. Whether a greedy use shares an actual duration or receives an oracle sample is determined before that duration is observed. Hence every greedy use still receives a fresh sample from the correct distribution, and the auxiliary execution has exactly the intended law.

For feasibility, consider an inactive resource. The actual algorithm follows only a subset of its greedy uses, and each followed use has exactly the same busy interval in the two executions. Hence the actual busy intervals of that resource form a subset of its greedy busy intervals. In particular, whenever the algorithm is instructed to follow a greedy assignment, that inactive resource is available in the actual execution. Active resources are assigned only after their actual availability is checked.

Finally, once the weighted-greedy execution and the random shift are fixed, condition \eqref{eq:eligible} determines the kept-edge graph. Fixing in addition the activation bits and within-bucket orders determines a priority order on the active resources. The active part of the algorithm is therefore exactly fixed-priority greedy on this graph.
\end{proof}

By the resource-by-resource representation, the decisions made in a higher active bucket do not depend on the activation bits, within-bucket orders, or active-use durations of lower buckets. Thus, when we later expose active buckets from high to low, the randomness of lower buckets remains unrevealed. We use this in \cref{lem:higher-buckets}.

We also need two elementary properties of the random geometric rounding. Define
\begin{equation}\label{eq:r-chi}
    r(\gamma)=\frac{\gamma-1}{\gamma\ln\gamma},
    \qquad
    \chi(\gamma)=\frac1{\gamma\ln\gamma}.
\end{equation}

\begin{lemma}\label{lem:rounding}
For every fixed $c>0$,
\begin{equation}\label{eq:mean-rounding}
    \E_{\zeta}[\bar c]=r(\gamma)c.
\end{equation}
Moreover, for fixed $x\ge y>0$,
\begin{equation}\label{eq:boundary}
    \E_{\zeta}\bigl[\bar y\,
       \1_{\{x,y\text{ lie in different buckets}\}}\bigr]
    \le\chi(\gamma)(x-y).
\end{equation}
\end{lemma}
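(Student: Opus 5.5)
Both claims reduce to a change of variables on a logarithmic scale. Write $\log_\gamma c=k+u$ with $k\in\mathbb Z$ and $u\in[0,1)$. The bucket containing $c$ has lower endpoint $\ell_i=\gamma^{i+\zeta}\le c<\gamma^{i+1+\zeta}$. If $\zeta\le u$, then $\bar c=\gamma^{k+\zeta}$; if $\zeta>u$, then $\bar c=\gamma^{k-1+\zeta}$. In either case $\bar c=c\,\gamma^{-\{u-\zeta\}}$, where $\{\cdot\}$ is the fractional part. Since $\zeta$ is uniform on $[0,1)$, the variable $\{u-\zeta\}$ is also uniform on $[0,1)$. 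Therefore
\[
\E_\zeta[\bar c]=c\int_0^1\gamma^{-s}\,ds=c\,\frac{1-\gamma^{-1}}{\ln\gamma}=\frac{\gamma-1}{\gamma\ln\gamma}\,c=r(\gamma)c,
\]
which is \eqref{eq:mean-rounding}.

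For \eqref{eq:boundary}, first dispose of the case $x\ge\gamma y$. Then the bound follows directly from \eqref{eq:mean-rounding}:
\[
\E[\bar y\1]\le\E[\bar y]=r(\gamma)y=\chi(\gamma)(\gamma-1)y\le\chi(\gamma)(x-y).
\]
Now suppose $y\le x<\gamma y$. Then $x$ and $y$ lie in different buckets exactly when some boundary $\gamma^{j+\zeta}$ falls in $(y,x]$. Because $x/y<\gamma$, there is at most one such boundary, and when it exists it is the lower endpoint of the bucket of $x$. In that event $y$ lies in the bucket just below, so $\bar y=\gamma^{j-1+\zeta}=b/\gamma$, where $b\in(y,x]$ denotes the boundary.

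It remains to integrate over the position of $b$. As $\zeta$ is uniform, $\log_\gamma b$ has density $1$ on the unit-length window $(\log_\gamma y,\log_\gamma x]$. Changing variables to $b$, this becomes density $1/(b\ln\gamma)$ on $(y,x]$. Hence
\[
\E[\bar y\1]=\int_y^x\frac{b}{\gamma}\cdot\frac{db}{b\ln\gamma}=\frac{x-y}{\gamma\ln\gamma}=\chi(\gamma)(x-y),
\]
so \eqref{eq:boundary} holds, in fact with equality in this case.

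The only delicate points are the half-open endpoint conventions, which affect only a measure-zero set of $\zeta$, and the case split at $x=\gamma y$. Neither presents a real obstacle.
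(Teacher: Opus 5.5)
Your proof is correct and follows the paper's argument. You use the same fractional-part parametrization for the mean and the same case split at $x=\gamma y$; your integral over the boundary position $b=\gamma\bar y\in(y,x]$ is just a change of variables of the paper's integral of $y\gamma^{-z}$ over $z\in[1-d,1)$. One wording fix: the window $(\log_\gamma y,\log_\gamma x]$ has length $\log_\gamma(x/y)<1$, not unit length. On it, $\log_\gamma b$ has sub-probability density $1$, with total mass equal to the separation probability.
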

\begin{proof}
Let $z$ be the fractional part of $\log_\gamma c-\zeta$. Since $\zeta$ is uniform on $[0,1)$, so is $z$, and $\bar c=c\gamma^{-z}$. Therefore
\[
    \E_{\zeta}[\bar c]
      =c\int_0^1\gamma^{-z}\,dz
      =\frac{\gamma-1}{\gamma\ln\gamma}c.
\]

For the second claim, let $d=\log_\gamma(x/y)$. If $0\le d\le1$, a bucket boundary separates $x$ and $y$ exactly when the fractional part associated with $y$ lies in $[1-d,1)$. Hence
\[
    \E_{\zeta}[\bar y\1_{\mathrm{separated}}]
       =y\int_{1-d}^1\gamma^{-z}\,dz
       =\frac{x-y}{\gamma\ln\gamma}.
\]
If $d\ge1$, the two rewards are always separated, and \eqref{eq:mean-rounding} gives
\[
    \E_{\zeta}[\bar y]
       =r(\gamma)y
       =\chi(\gamma)(\gamma-1)y
       \le\chi(\gamma)(x-y).
\]
\end{proof}

Since $\theta=\gamma-1$,
\[
    \frac{r(\gamma)}\theta=\chi(\gamma).
\]

This identity lets the two cases in the filtering bound use the same coefficient $\chi(\gamma)$.

We now analyze the algorithm by tracking how much rounded benchmark value remains available to active buckets. We start with benchmark requests whose owner buckets are active, subtract the part removed by filtering, and then subtract requests already taken by higher active buckets. The unweighted $\alpha$ guarantee applies to what remains. Finally, we combine the resulting active reward with the greedy reward from inactive buckets.

\subsection{Filtering Loss}

Let
\[
    \overline M_{\mathrm{act}}
      =\sum_{t\in O}\bar c_{o(t)}I_{i(o(t))}
\]
be the rounded benchmark value whose owner bucket is active. Conditional on the benchmark and the random shift, each owner bucket is active with probability $\eta$. Therefore \cref{lem:rounding} gives
\begin{align}
    \E[\overline M_{\mathrm{act}}]
      &=\eta\,\E\sum_{t\in O}\bar c_{o(t)}\notag\\
      &=\eta r(\gamma)\,\E\sum_{t\in O}c_{o(t)}
       =\eta r(\gamma)M.
\label{eq:Mact}
\end{align}

Filtering may delete some of these benchmark owner edges. Let the total rounded value of the deleted edges be
\[
    L_{\mathrm{filter}}
      =\sum_{t\in O}\bar c_{o(t)}I_{i(o(t))}
        \1_{\{(o(t),t)\text{ is not kept}\}}.
\]
The next lemma shows that this loss is bounded by the slack $\sigma$ of weighted greedy, rather than by the full benchmark value.

\begin{lemma}\label{lem:filter-loss}
\begin{equation}\label{eq:filter-loss}
    \E[L_{\mathrm{filter}}]\le\eta\chi(\gamma)\sigma.
\end{equation}
\end{lemma}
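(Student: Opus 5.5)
We bound the loss edge by edge and reduce each case to one of the two bounds in \cref{lem:rounding}. Fix the benchmark and the weighted-greedy execution. Both are independent of $\zeta$ and of the activation bits, so the sets $\mathcal E,\mathcal L$ and the values $g(t)$, $o(t)$ are fixed while we average over $\zeta$ and the bits. Take $t\in O$ with owner $v=o(t)$, and suppose the edge $(v,t)$ is deleted.

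By \eqref{eq:eligible}, deletion forces three things: $g(t)\neq\bot$, $i(g(t))\neq i(v)$, and $c_{g(t)}\ge\theta c_v$. Whether the edge is kept depends only on $\zeta$ and on the greedy execution, not on the activation bits. So averaging $I_{i(v)}$ contributes exactly a factor $\eta$. It therefore suffices to show that
\[
    \E_\zeta\bigl[\bar c_v\1_{\{(v,t)\text{ deleted}\}}\bigr]
    \le\chi(\gamma)\cdot\begin{cases}c_{g(t)}-c_v,&t\in\mathcal E,\\ c_{g(t)},&t\in\mathcal L.\end{cases}
\]
Summing over $t$ and taking expectations then gives $\E[L_{\mathrm{filter}}]\le\eta\chi(\gamma)(\Delta+B)$. \Cref{lem:greedy-bound} gives $\Delta+B\le\sigma$, which proves \eqref{eq:filter-loss}.

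\emph{Case $t\in\mathcal E$.} Here $g(t)\prec o(t)$, so $c_{g(t)}\ge c_v$. Deletion requires $g(t)$ and $v$ to lie in different buckets. Apply \eqref{eq:boundary} with $x=c_{g(t)}$ and $y=c_v$. This gives
\[
    \E_\zeta\bigl[\bar c_v\1_{\text{separated}}\bigr]\le\chi(\gamma)(c_{g(t)}-c_v),
\]
and the deletion indicator is at most the separation indicator.

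\emph{Case $t\in\mathcal L$.} Here $c_{g(t)}\le c_v$, and deletion forces $c_{g(t)}\ge\theta c_v$. Drop the separation event entirely and use \eqref{eq:mean-rounding}:
\[
    \E_\zeta[\bar c_v]=r(\gamma)c_v\le \frac{r(\gamma)}{\theta}c_{g(t)}=\chi(\gamma)c_{g(t)}.
\]
Note that the event $\{c_{g(t)}\ge\theta c_v\}$ does not depend on $\zeta$.

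There is one subtle point in the $\mathcal E$ case. Ties in reward are ordered by $\prec$, so $g(t)\prec o(t)$ can hold with $c_{g(t)}=c_v$. This is harmless: equal rewards always lie in the same bucket, so the edge is kept and the bound holds trivially.

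The step that needs the most care is independence. The activation bit of the owner's bucket must be independent of whether the edge is kept, and $\zeta$ must be independent of the greedy execution and the benchmark. Both hold by construction, since the algorithm draws $\zeta$ and the bits independently of the auxiliary greedy run. The greedy run's coupled durations do not depend on $\zeta$ in law (\cref{lem:weighted-implementation}); in the analysis we can treat the greedy execution as generated first.
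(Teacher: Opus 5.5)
Your proof is correct and follows the paper's argument essentially step for step. You fix the greedy and benchmark executions and restrict to $\mathcal E\cup\mathcal L$. On $\mathcal E$ you use the boundary bound \eqref{eq:boundary}, and on $\mathcal L$ you use the mean-rounding bound together with $r(\gamma)/\theta=\chi(\gamma)$. You then take the factor $\eta$ from the owner's activation bit and finish with \cref{lem:greedy-bound}. Your remarks on ties, and on conditioning on $\zeta$ before averaging $I_{i(o(t))}$, are accurate and only make explicit what the paper leaves implicit.
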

\begin{proof}
Fix the independent greedy and benchmark executions. If $g(t)=\bot$, the owner edge is always kept by \eqref{eq:eligible}, so only requests in $\mathcal E\cup\mathcal L$ can contribute to $L_{\mathrm{filter}}$.

First consider $t\in\mathcal E$, and write $u=g(t)$ and $v=o(t)$. Then $c_u\ge c_v$. The owner edge $(v,t)$ is certainly kept if $u$ and $v$ lie in the same bucket. Hence deletion requires a bucket boundary between the two rewards. By \eqref{eq:boundary}, the expected rounded owner value lost to this event, over the random shift, is at most
\[
    \chi(\gamma)(c_u-c_v).
\]
The owner bucket is active independently with probability $\eta$, so summing over $\mathcal E$ gives a contribution at most $\eta\chi(\gamma)\Delta$.

Now consider $t\in\mathcal L$, so $c_u\le c_v$. If $c_u<\theta c_v$, condition \eqref{eq:eligible} keeps the owner edge for every shift, and there is no loss. Otherwise $c_u\ge\theta c_v$. Even if the owner edge is deleted, its expected rounded value is at most
\[
    \E_{\zeta}[\bar c_v]
      =r(\gamma)c_v
      \le\frac{r(\gamma)}\theta c_u
      =\chi(\gamma)c_u.
\]
After multiplying by the activation probability and summing over $\mathcal L$, this contributes at most $\eta\chi(\gamma)B$.

Therefore
\[
    \E[L_{\mathrm{filter}}]
      \le\eta\chi(\gamma)(\Delta+B)
      \le\eta\chi(\gamma)\sigma,
\]
where the last inequality is \cref{lem:greedy-bound}.
\end{proof}

\subsection{Reward from Active Buckets}

After filtering, a benchmark request may still be unavailable to its owner bucket because a higher active bucket has already taken it. For a bucket $\mathcal B_i$, let $H_i$ be the set of requests assigned by strictly higher active buckets, and define
\[
    X
      =\sum_{t\in O}\bar c_{o(t)}I_{i(o(t))}
        \1_{\{(o(t),t)\text{ is kept}\}}
        \1_{\{t\in H_{i(o(t))}\}}.
\]
Thus $X$ is the rounded benchmark value that survives filtering but is unavailable to its owner bucket because a higher active bucket has already taken the request.

Fix an active bucket $\mathcal B_i$. Condition on the random shift, the weighted-greedy execution, the randomness of higher active buckets, and the event $I_i=1$. By \cref{lem:weighted-implementation}, the execution within $\mathcal B_i$ is ordinary unweighted \Ranking{} on the kept edges and the requests outside $H_i$. The within-bucket order and active-use durations remain independent under this conditioning.

To apply the unweighted guarantee, we compare with the benchmark uses owned by $\mathcal B_i$ whose owner edge survives filtering and whose request is not in $H_i$. We realize these uses by simulating the original benchmark internally. Whenever a benchmark use is retained, the comparison policy makes the same use and feeds its realized duration into the simulation. Whenever a benchmark use is deleted, the comparison policy skips it and instead feeds the simulation a fresh independent sample from the same duration distribution.

The internal simulation therefore has the same law as the original benchmark. Since the comparison policy only skips uses, each resource is at least as available as in the simulation, so every retained use is feasible. Each duration is generated only after the corresponding decision, so the policy remains nonanticipatory. Under the conditioning above, the benchmark simulation and its fresh samples are independent of the within-bucket \Ranking{} order and the active-use durations.

Every actual use by a resource in $\mathcal B_i$ earns at least $\ell_i$, while every retained benchmark use owned by this bucket has rounded value exactly $\ell_i$. Across all active buckets, the total rounded value of the retained benchmark uses is
\[
    \overline M_{\mathrm{act}}-L_{\mathrm{filter}}-X.
\]
Let $Z$ denote the total \emph{actual} reward earned by active resources. Applying the unweighted $\alpha$ guarantee separately to each active bucket yields
\begin{equation}\label{eq:rank-sum}
    \E[Z]\ge\alpha\,
        \E[\overline M_{\mathrm{act}}-L_{\mathrm{filter}}-X].
\end{equation}

The next lemma bounds $X$ in terms of the active reward itself. A lower owner bucket contributes to $X$ only if it is active, which occurs with probability $\eta$, and its rounded reward is at most a $1/\gamma$ fraction of the reward of the higher-bucket match.

\begin{lemma}\label{lem:higher-buckets}
\begin{equation}\label{eq:higher-buckets}
    \E[X]\le\frac\eta\gamma\E[Z].
\end{equation}
\end{lemma}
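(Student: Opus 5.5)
We bound $X$ by a charging argument: every request counted in $X$ is charged to the higher-bucket match that took it. Each request $t$ counted in $X$ was assigned by some active resource $u$ in a bucket $\mathcal B_j$ with $j>i(o(t))$, where $o(t)$ is its benchmark owner. Since $u$ earned $c_u\ge\ell_j\ge\gamma\ell_{i(o(t))}=\gamma\,\bar c_{o(t)}$, the rounded owner value lost is at most $c_u/\gamma$. Moreover, each active match $(u,t)$ can be charged at most once, because $t$ has a unique owner. Summing over the active matches that take benchmark requests owned by lower buckets then gives
\[
    X\le\frac1\gamma\sum_{(u,t)\text{ active match}} c_u\,
      I_{i(o(t))}\1_{\{t\in O,\ i(o(t))<i(u)\}}.
\]

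The key step is to remove the factor $I_{i(o(t))}$ in expectation, which is what produces $\eta$. For this we expose the active buckets from high to low, as noted after \cref{lem:weighted-implementation}. Consider the event that $u$ (in bucket $j$) is active and serves $t$. By the resource-by-resource representation, this event depends only on the greedy execution, the shift, the activation bits and within-bucket orders of buckets $\ge j$, and the active-use durations of those buckets. The benchmark is independent of all of these, so conditional on this information the owner $o(t)$ and its bucket index $i(o(t))$ are fixed quantities. The activation bit of any strictly lower bucket is still an unrevealed $\mathrm{Bernoulli}(\eta)$ variable, independent of the conditioning.

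One subtlety is that the filtering rule \eqref{eq:eligible} for edges of higher buckets depends only on $g$ and the shift, and not on lower activation bits. We must check this, so that higher decisions genuinely ignore $I_{i(o(t))}$. We must also check that the owner's bucket index is determined by the shift and the benchmark alone.

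Given these facts, taking conditional expectation yields $\E[c_u I_{i(o(t))}\1\{\cdots\}]\le\eta\,\E[c_u\1\{u\text{ serves }t\}]$. Summing over all active matches then bounds the right-hand side by $\frac{\eta}{\gamma}\E[Z]$, which is \eqref{eq:higher-buckets}.

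I expect the main obstacle to be the independence claim: that the assignment made by a higher active bucket is independent of the activation bit of the lower owner bucket. The danger is that lower active resources could affect availability. They do not: lower active resources never preempt higher ones, since active resources are prioritized by decreasing bucket index. Inactive greedy-followed uses also do not matter, because an active resource with a kept edge is always served first. I would verify this via \cref{lem:resource-ordered} applied to the active priority order on the kept-edge graph.
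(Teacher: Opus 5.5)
Your proposal is correct and follows essentially the same argument as the paper: you charge each term of $X$ to the unique higher-bucket active match $(u,t)$ that took the request, use $\bar c_{o(t)}=\ell_i\le\ell_j/\gamma\le c_u/\gamma$, and expose active buckets from high to low so that the owner bucket's activation bit is still an independent $\mathrm{Bernoulli}(\eta)$ variable once the match is determined. Your additional checks — that the kept-edge rule depends only on the greedy execution and the shift, and that inactive greedy-followed uses never block active resources — make explicit what the paper attributes to the resource-by-resource representation and \cref{lem:weighted-implementation}.
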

\begin{proof}
Fix the random shift and the weighted-greedy execution, and expose active buckets from high to low. Consider an active match $(u,t)$ made by bucket $\mathcal B_j$. Whether this match occurs, and which resource $u$ makes it, depends only on randomness in buckets of index at least $j$. Suppose the benchmark owner $v=o(t)$ lies in a lower bucket $\mathcal B_i$, with $i<j$. At this point the activation bit $I_i$ has not been exposed and remains Bernoulli$(\eta)$. Moreover,
\[
    \bar c_v=\ell_i\le\frac{\ell_j}{\gamma}\le\frac{c_u}{\gamma}.
\]
Thus, conditional on the higher-bucket execution, this active match contributes at most $(\eta/\gamma)c_u$ in expectation to $X$. Each active match can account for at most one term in $X$. Summing over all active matches proves \eqref{eq:higher-buckets}.
\end{proof}

Substituting \eqref{eq:Mact}, \eqref{eq:filter-loss}, and \eqref{eq:higher-buckets} into \eqref{eq:rank-sum} gives
\[
    \E[Z]\ge\alpha\left(
      \eta r(\gamma)M-\eta\chi(\gamma)\sigma
                         -\frac\eta\gamma\E[Z]\right).
\]
After rearranging, define
\[
    \phi=\frac{\alpha\eta}{1+\alpha\eta/\gamma}
\]
and obtain the active-reward bound
\begin{equation}\label{eq:Z-bound}
    \E[Z]\ge\phi\bigl(r(\gamma)M-\chi(\gamma)\sigma\bigr).
\end{equation}

\subsection{Combining Active and Inactive Buckets}

We now combine the active reward with the greedy reward from inactive buckets. Define
\[
    G_{\mathrm{inact}}
      =\sum_{t:g(t)\ne\bot}c_{g(t)}
        \bigl(1-I_{i(g(t))}\bigr).
\]
Conditioning on the auxiliary weighted-greedy execution and the random shift, each relevant bucket is inactive with probability $1-\eta$. Therefore
\[
    \E[G_{\mathrm{inact}}]=(1-\eta)G.
\]

The actual algorithm follows every such inactive greedy assignment unless the request is taken by an active resource $v$. In that case the two resources lie in different buckets, so the same-bucket clause of \eqref{eq:eligible} does not apply. Since the active edge is kept, necessarily
\[
    c_{g(t)}<\theta c_v.
\]
Thus an active match of reward $c_v$ can displace less than $\theta c_v$ of inactive greedy reward. Each active match displaces at most one such greedy assignment, so pathwise
\[
    \ALG\ge G_{\mathrm{inact}}+(1-\theta)Z.
\]

Taking expectations and applying \eqref{eq:Z-bound}, let
\begin{equation}\label{eq:psi}
    \psi=(1-\theta)\phi
        =(2-\gamma)\frac{\alpha\eta}{1+\alpha\eta/\gamma}.
\end{equation}
Then
\begin{align}\label{eq:weighted-bound}
    \E[\ALG]
      &\ge(1-\eta)G+\psi\bigl(r(\gamma)M-\chi(\gamma)\sigma\bigr)\notag\\
      &=\left(\frac{1-\eta}{2}+\psi r(\gamma)\right)M
       +\left(\frac{1-\eta}{2}-\psi\chi(\gamma)\right)\sigma,
\end{align}
where we used $G=(M+\sigma)/2$.

The only instance-dependent term in this bound is $\sigma$. If its coefficient is nonnegative, then $\sigma\ge0$ can only improve the bound and may be dropped. This gives the following lifting theorem.

\begin{theorem}\label{thm:lifting}
Suppose \Ranking{} is $\alpha$-competitive for some $\alpha>1/2$ in the unweighted model. For $1<\gamma<2$ and $0<\eta<1$, let $\psi$ be as in \eqref{eq:psi}. If
\begin{equation}\label{eq:lifting-condition}
    \frac{1-\eta}{2}\ge\psi\chi(\gamma),
\end{equation}
then
\begin{equation}\label{eq:lifting-ratio}
    \frac{\E[\ALG]}{\E[\OPT]}
       \ge\frac12+\psi r(\gamma)-\frac\eta2.
\end{equation}
\end{theorem}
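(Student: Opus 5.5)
The proof is a direct consequence of the bound \eqref{eq:weighted-bound}, which was derived just before the statement. I will first recall that chain in order and check that each step holds for arbitrary $1<\gamma<2$, $0<\eta<1$ and $\theta=\gamma-1$, without extra assumptions on the instance. The steps are as follows.

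\begin{enumerate}
\item \Cref{lem:weighted-implementation} shows the algorithm is feasible and that each active bucket runs unweighted \Ranking{}.
\item The comparison-policy construction then lets us apply the $\alpha$ guarantee bucket by bucket. This gives \eqref{eq:rank-sum}.
\item We substitute \eqref{eq:Mact}, \cref{lem:filter-loss} and \cref{lem:higher-buckets}, and solve the linear inequality in $\E[Z]$; the factor $1+\alpha\eta/\gamma$ is positive. This gives \eqref{eq:Z-bound}.
\item The pathwise bound $\ALG\ge G_{\mathrm{inact}}+(1-\theta)Z$, with $1-\theta=2-\gamma>0$, together with $\E[G_{\mathrm{inact}}]=(1-\eta)G$ and $G=(M+\sigma)/2$, then yields \eqref{eq:weighted-bound}.
\end{enumerate}

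With \eqref{eq:weighted-bound} in hand, the remaining step is short. By \cref{lem:greedy-bound}, $\sigma\ge0$. Under hypothesis \eqref{eq:lifting-condition}, the coefficient $\frac{1-\eta}{2}-\psi\chi(\gamma)$ of $\sigma$ is nonnegative, so the $\sigma$-term can be dropped. This leaves
\[
    \E[\ALG]\ge\Bigl(\tfrac12-\tfrac\eta2+\psi r(\gamma)\Bigr)M .
\]
Dividing by $M=\E[\OPT]$ gives \eqref{eq:lifting-ratio}. If $M=0$ the statement is trivial.

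The only point needing care is the sign of $\psi$ when \eqref{eq:Z-bound} is multiplied by $(1-\theta)$. The right-hand side $r(\gamma)M-\chi(\gamma)\sigma$ of \eqref{eq:Z-bound} could be negative. However, multiplying the inequality $\E[Z]\ge\phi(\cdots)$ by the positive constant $1-\theta$ preserves the direction regardless of that sign, so no issue arises.

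I expect no real obstacle here, since all the substantive work lies in the preceding lemmas. To conclude \cref{thm:weighted-reduction} from this theorem, one would also exhibit parameters for which the ratio exceeds $1/2$. As $\eta\to0$ we have $\psi\approx(2-\gamma)\alpha\eta$. The gain over $1/2$ is then about $\eta\bigl((2-\gamma)\alpha r(\gamma)-\tfrac12\bigr)$, and condition \eqref{eq:lifting-condition} holds for small $\eta$. As $\gamma\to1$ we have $r(\gamma)\to1$, so the gain is positive for $\gamma$ close to $1$ whenever $\alpha>1/2$.
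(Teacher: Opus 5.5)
Your proposal is correct and is exactly the paper's argument. Starting from \eqref{eq:weighted-bound}, you use $\sigma\ge0$ from \cref{lem:greedy-bound} and drop the $\sigma$-term, whose coefficient $\frac{1-\eta}{2}-\psi\chi(\gamma)$ is nonnegative by \eqref{eq:lifting-condition}. Your closing parameter remark (fix $\gamma$ near $1$, then let $\eta\to0$) is also a valid route to \cref{thm:weighted-reduction}; it differs only cosmetically from the paper's coupled choice $\gamma=1+\varepsilon$, $\eta=\varepsilon/2$.
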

\begin{proof}
By \cref{lem:greedy-bound}, $\sigma\ge0$. Under \eqref{eq:lifting-condition}, the coefficient of $\sigma$ in \eqref{eq:weighted-bound} is nonnegative. Dropping that term gives
\[
    \E[\ALG]
      \ge\left(\frac{1-\eta}{2}+\psi r(\gamma)\right)M
      =\left(\frac12+\psi r(\gamma)-\frac\eta2\right)M.
\]
\end{proof}

\begin{proof}[Proof of \cref{thm:weighted-reduction}]
The algorithm invokes \Ranking{} only on unweighted residual instances, and the analysis uses those instances only through the guarantee $\alpha$. If necessary, we may weaken the guarantee so that $\alpha\in(1/2,1)$.

Set $\gamma=1+\varepsilon$ and $\eta=\varepsilon/2$. As $\varepsilon\downarrow0$,
\[
    \frac{1-\eta}{2}-\psi\chi(\gamma)
       \longrightarrow\frac{1-\alpha}{2}>0,
    \qquad
    \frac{\psi r(\gamma)-\eta/2}{\eta}
       \longrightarrow\alpha-\frac12>0.
\]
Hence, for all sufficiently small $\varepsilon>0$ depending only on $\alpha$, condition \eqref{eq:lifting-condition} holds and the ratio in \eqref{eq:lifting-ratio} is strictly larger than $1/2$. This proves the black-box reduction.
\end{proof}

\subsection{Numerical Parameters}

For the explicit numerical guarantee in \cref{cor:weighted-main}, choose
\begin{equation}\label{eq:explicit-parameters}
    \alpha_0=\frac{12799}{25000}=0.51196<\rhostar,
    \qquad \gamma=\frac{503}{500},
    \qquad \eta=\frac{29}{5000}.
\end{equation}
The elementary numerical estimates needed below are verified in \cref{app:arithmetic}.

\begin{proof}[Proof of \cref{cor:weighted-main}]
By \cref{thm:main}, $\alpha_0$ is a valid unweighted guarantee. The estimates in \cref{app:arithmetic} give
\[
    r(\gamma)>\frac{997}{1000},\qquad
    \frac{1-\eta}{2}-\psi\chi(\gamma)>0,
    \qquad
    \psi\frac{997}{1000}-\frac\eta2>\frac{34}{10^6}.
\]
Thus \eqref{eq:lifting-condition} holds, and \cref{thm:lifting} yields
\[
    \E[\ALG]\ge\left(\frac12+\frac{34}{10^6}\right)M
               =0.500034\,M.
\]
\end{proof}
\section{Discussion}\label{sec:discussion}

Our results show that a single random priority order suffices to beat $1/2$ for unweighted reusable resources, even with arbitrary resource-dependent i.i.d.\ durations. The key insight is to view \Ranking{} resource by resource: each resource serves a schedule of requests left by higher-priority resources. For deterministic durations, this view turns the algorithm into random-order greedy for a coverage function. For stochastic durations, a comparison of residual schedules preserves the expected-gain inequalities needed for the analysis, while tracking benchmark requests relative to their owners' ranks yields the improvement over $1/2$. Finally, our weighted reduction shows that any such unweighted improvement can also yield a weighted guarantee above $1/2$.

For future works, the main open question is to determine the tight competitive ratio of \Ranking{} for reusable resources, even when all durations are the same deterministic constant. The nonreusable special case gives an upper bound of $1-1/e$~\citep{karp-vazirani-vazirani-1990}, leaving a substantial gap from our guarantee. For weighted resources, it remains open whether one can retain more of the unweighted improvement and whether a guarantee above $1/2$ is possible without sampling access to the duration distributions.

\section*{AI Disclosure}
Generative AI tools, especially GPT 5.6 and GPT 6, assisted with many parts of the mathematical development and exposition. The authors carefully verified the mathematical arguments and revised the exposition for clarity. The authors take full responsibility for the final text and results.

\printbibliography

\newpage

\appendix
\crefalias{section}{appendix}

\section{Combining the Rank Inequalities}\label{app:combine}

This appendix contains the calculation deferred from \cref{sec:rank-cut}. No further property of the matching problem is used here.

\begin{lemma}\label{lem:analytic-bound}
Let $w,a,b$ be nonnegative bounded functions satisfying \eqref{eq:owner-capture}, \eqref{eq:owner-residual}, and \eqref{eq:rank-cut}, with $J$ and $D$ defined by \eqref{eq:J-def} and \eqref{eq:D-def}. Then
\[
    \int_0^1w(x)\,dx\ge\frac{5-2\sqrt3}{3}.
\]
\end{lemma}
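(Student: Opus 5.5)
The plan is to find nonnegative weight functions and take a pointwise combination of the three inequalities in which the $a$-terms cancel after integration. The $b$-terms should then be left with nonnegative coefficients. Concretely, choose a cut $p\in(0,1)$ and a constant $\lambda\in[0,1]$. On $[0,p)$ use a mixture of \eqref{eq:owner-capture} and \eqref{eq:owner-residual}. On $[p,1)$ use a mixture of all three, with \eqref{eq:rank-cut} applied at the single cut $p$.

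A single cut may not suffice, since the paper mentions a ``continuous combination''. The fallback is to integrate \eqref{eq:rank-cut} against a measure $\mu(dp)$ over cuts as well. The weights are then $\lambda(x)$ for \eqref{eq:owner-capture}, $\kappa(x)$ for \eqref{eq:owner-residual}, and $\mu$ for \eqref{eq:rank-cut}. The constraint is $\lambda(x)+\kappa(x)+\mu([0,x])=1$, so that the left side totals $\int_0^1 w$.

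Step one: compute the $a$-coefficient of each term via Fubini. The term $\int\kappa(x)J(x)\,dx$ contributes $-\frac{a(s)}{1-s}\int_s^1\kappa$. The cut term $\mu(dp)\int_p^1 D(p)\,dx$ equals $\mu(dp)\bigl(pJ(q)-\int_0^p\frac{p-s}{1-s}a(s)\,ds-\int_0^p b\bigr)$, where the factor $(1-p)/q=1$ cancels. This contributes $+\frac{a(s)}{1-s}\,p\,\1_{s\le 1-p}$ and $-\frac{a(s)}{1-s}(p-s)\1_{s\le p}$. Finally $\lambda(s)a(s)$ contributes $+\lambda(s)a(s)$. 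Since $a\ge0$ is otherwise unconstrained, the worst case forces each total $a(s)$-coefficient to be nonnegative. We pick the weights so it vanishes identically. This gives an integral equation linking $\lambda$, $\kappa$ and $\mu$.

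Step two: collect the $b$-coefficients and impose nonnegativity. The contributions are $+\lambda(s)$ from capture, $-\int_{[0,s]}\mu(dp)$ from the $\int_0^pb$ part of $D$, and $-\frac{1}{1-s}\int_{p\le s}\mu(dp)\,(1-s)$ from the $\int_p^x b/(1-s)$ term after integrating $x$ over $(s,1)$. So $\lambda(s)\ge 2\mu([0,s])$ is the requirement. The constant term that remains is $\int\kappa(x)\,dx$, and this is the lower bound to maximize.

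Step three: solve the resulting optimization, namely maximize $\int\kappa$ subject to the $a$-equation and the $b$-inequality. I expect an optimum with $\mu$ a point mass at some $p$, or supported on a short interval, and $\kappa$ piecewise constant. With a point mass at $p$ and $\lambda=2\mu$ after the cut, the $a$-equation becomes a linear condition in $s$ on each of the regions $s<\min(p,q)$, between $p$ and $q$, and beyond. Optimizing over $p$ should give a quadratic whose root involves $\sqrt3$, producing $(5-2\sqrt3)/3$.

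The main obstacle is step three. Making the $a$-coefficient vanish exactly for every $s$ may be impossible with constant weights, because the kernel $1/(1-s)$ clashes with the piecewise-linear terms. I would first try to relax to ``$a$-coefficient $\ge0$'' and choose $\kappa(x)$ proportional to $(1-x)$ on pieces so that $\int_s^1\kappa/(1-s)$ is linear. If that fails, I would let $\mu$ have density and differentiate the $a$-equation twice in $s$ to get an ODE for the weights, then verify the value explicitly.
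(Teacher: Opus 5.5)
\textbf{There is a genuine gap: the multipliers are never constructed, and the family you restrict to provably cannot exceed $1/2$.} Your framework matches the paper's: nonnegative multipliers, cancel the $a$-terms, keep the $b$-coefficients nonnegative, and read the constant off the weight on \eqref{eq:owner-residual}. But the lemma consists precisely of exhibiting and checking those multipliers, and you leave that as Step three.

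The obstruction is that you use the cut inequality at cut $p$ with the same weight $\mu(dp)$ at every rank $x\in[p,1)$. No such choice certifies more than $1/2$. To see this, put $\Lambda(s)=\int_s^1\kappa(x)\,dx$ and $q=1-p$. Nonnegativity of the $a$-coefficient reads $\lambda(s)\ge\Lambda(s)/(1-s)-g(s)$, where $g(s)=\int\mu(dp)\,\bigl(p\1_{s\le q}-(p-s)\1_{s\le p}\bigr)/(1-s)$. Combined with $\lambda(s)\le1-\kappa(s)-\mu([0,s])$, this gives $-\frac{d}{ds}\bigl[(1-s)\Lambda(s)\bigr]\le(1-s)\bigl(1-\mu([0,s])+g(s)\bigr)$. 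Integrating over $[0,1]$,
\[
\int_0^1\kappa(x)\,dx=\Lambda(0)\le\frac12+\int\mu(dp)\Bigl(pq-\frac{p^2}{2}-\frac{q^2}{2}\Bigr)=\frac12-\int\mu(dp)\,\frac{(p-q)^2}{2}\le\frac12 .
\]
This holds whether the cancellation is exact or one-sided, and for point masses as well as diffuse $\mu$. Your $b$-bookkeeping also needs correcting. The $-\int_0^p b$ part of $qD(p)$ charges $b(s)$ with $\mu((s,1))$, not $\mu([0,s])$. So your condition should be that $\lambda(s)$ is at least the total mass of $\mu$, for every $s$.

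\textbf{The missing idea is where to apply a cut.} Redo the computation with a general weight $\nu(dp,dx)$ on pairs $x\ge p$. The constant is then at most $\frac12+\int\nu(dp,dx)\bigl(H_p-(1-x)\bigr)$, with equality under exact cancellation. Here $H_p=p-p^2/(2q)$ is the $(1-s)$-weighted integral of the $a$-coefficient of $D(p)$. So a cut pays off only at ranks $x>1-H_p$, well past the cut itself.

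The paper proceeds as follows:
\begin{enumerate}
\item It takes the maximizer $p_0=1-1/\sqrt3$ of $H_p$. This is the quadratic $3p^2-6p+2=0$ you anticipated, and it gives $H_{p_0}=2-\sqrt3$.
\item It applies \eqref{eq:rank-cut} at this single cut with weight $1/3$, but only for $x\ge\tau=\sqrt3-1=1-H_{p_0}$. On $[\tau,1)$ all three weights equal $1/3$, which is the largest cut weight compatible there with exact cancellation and the $b$-condition.
\item It gives \eqref{eq:owner-capture} the complementary weight.
\end{enumerate}
Exact $a$-cancellation then becomes a first-order linear ODE for $\Lambda$ with integrating factor $1-s$. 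Its solution is the explicit residual weight in \eqref{eq:lambda1-def}, which is not piecewise constant. It has $\Lambda(0)=\frac12+\frac{(2-\sqrt3)^2}{6}=\frac{5-2\sqrt3}{3}$. What remains is the routine check that all weights and the paper's $b$-coefficients $\lambda_3-L$ are nonnegative.
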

\begin{proof}
\emph{Choosing the weights.}
Set
\begin{equation}\label{eq:constants}
    q_0=\frac1{\sqrt3},\qquad p_0=1-q_0,
    \qquad \kappa=2-\sqrt3,\qquad \tau=\sqrt3-1.
\end{equation}
These constants satisfy $0<p_0<q_0<\tau<1$, $1-\tau=\kappa$, and $p_0/q_0=\tau$. Define
\begin{equation}\label{eq:h-def}
    h(s)=
    \begin{cases}
        \dfrac{s}{q_0(1-s)},&0\le s\le p_0,\\[1.5mm]
        \dfrac{p_0}{q_0(1-s)},&p_0<s\le q_0,\\[1.5mm]
        0,&q_0<s\le1,
    \end{cases}
    \qquad
    H(s)=\int_s^1(1-r)h(r)\,dr.
\end{equation}
Then \eqref{eq:D-def} at $p=p_0$ becomes
\begin{equation}\label{eq:D-special}
    D(p_0)=\int_0^{q_0}h(s)a(s)\,ds
             -\frac1{q_0}\int_0^{p_0}b(s)\,ds.
\end{equation}

We multiply \eqref{eq:owner-residual}, \eqref{eq:rank-cut}, and \eqref{eq:owner-capture} by weights $\lambda_1,\lambda_2,\lambda_3$ summing to one. The weights below are chosen to cancel the coefficient of $a(s)$ and keep that of $b(s)$ nonnegative:
\begin{equation}\label{eq:lambda1-def}
    \lambda_1(x)=
    \begin{cases}
       \displaystyle\frac12+\frac{\kappa^2}{6(1-x)^2}
         +\frac\kappa3\left(h(x)-\frac{H(x)}{(1-x)^2}\right),
         &x<\tau,\\[2.5mm]
       \displaystyle\frac13,&x\ge\tau,
    \end{cases}
\end{equation}
\begin{equation}\label{eq:lambda23-def}
    \lambda_2(x)=\frac13\1_{\{x\ge\tau\}},
    \qquad \lambda_3(x)=1-\lambda_1(x)-\lambda_2(x).
\end{equation}
Let
\begin{equation}\label{eq:Lambda-L}
    \Lambda(s)=\int_s^1\lambda_1(x)\,dx,
    \qquad
    L(s)=
    \begin{cases}
        \dfrac\kappa{3q_0},&s<p_0,\\[1.5mm]
        \dfrac\kappa{3(1-s)},&p_0\le s<\tau,\\[1.5mm]
        \dfrac13,&s\ge\tau.
    \end{cases}
\end{equation}

We will verify that, for almost every $s\in[0,1]$,
\begin{align}
    &\lambda_i(s)\ge0\quad(i=1,2,3),
        \qquad \lambda_1(s)+\lambda_2(s)+\lambda_3(s)=1,
        \label{eq:mult-nonnegative}\\
    &\lambda_3(s)-\frac{\Lambda(s)}{1-s}
            +\frac\kappa3h(s)=0,\label{eq:mult-a}\\
    &\lambda_3(s)-L(s)\ge0.\label{eq:mult-b}
\end{align}
Moreover,
\begin{equation}\label{eq:mult-value}
    \int_0^1\lambda_1(x)\,dx
      =\frac12+\frac{\kappa^2}{6}
      =\frac{5-2\sqrt3}{3}.
\end{equation}

\smallskip\noindent\emph{Combining the inequalities.}
Multiply \eqref{eq:owner-residual}, \eqref{eq:rank-cut} at $p=p_0$, and \eqref{eq:owner-capture} by $\lambda_1$, $\lambda_2$, and $\lambda_3$, respectively, and integrate. By \eqref{eq:mult-nonnegative},
\begin{align*}
    W\ge{}&\int_0^1\lambda_1(x)(1-J(x))\,dx\\
    &+\int_{\tau}^1\lambda_2(x)
       \left(D(p_0)-\int_{p_0}^x\frac{b(s)}{1-s}\,ds\right)dx\\
    &+\int_0^1\lambda_3(x)(a(x)+b(x))\,dx.
\end{align*}
Since $\int_0^1\lambda_2(x)\,dx=\kappa/3$, substituting \eqref{eq:D-special} and changing the order of integration gives
\begin{align}\label{eq:combined-bound}
    W\ge{}&\int_0^1\lambda_1(x)\,dx\notag\\
      &+\int_0^1 a(s)
          \left(\lambda_3(s)-\frac{\Lambda(s)}{1-s}
                         +\frac\kappa3h(s)\right)ds\notag\\
      &+\int_0^1 b(s)\bigl(\lambda_3(s)-L(s)\bigr)\,ds.
\end{align}
The $a$-term is zero by \eqref{eq:mult-a}, and the $b$-term is nonnegative by \eqref{eq:mult-b}. Therefore, using \eqref{eq:mult-value},
\[
    W\ge\frac{5-2\sqrt3}{3}.
\]
It remains to check the stated properties of the weights.

\smallskip\noindent\emph{Cancellation of the $a$-terms.}
Values at the finitely many breakpoints do not affect any integral.
Integrating \eqref{eq:h-def} gives
\begin{equation}\label{eq:H-pieces}
    H(s)=
    \begin{cases}
      \displaystyle\kappa-\frac{s^2}{2q_0},&0\le s\le p_0,\\[2mm]
      \displaystyle\frac{p_0(q_0-s)}{q_0},&p_0<s\le q_0,\\[2mm]
      0,&s>q_0.
    \end{cases}
\end{equation}
In particular,
\begin{equation}\label{eq:H0}
    H(0)=p_0-\frac{p_0^2}{2q_0}=\kappa.
\end{equation}
Since $H'(s)=-(1-s)h(s)$ almost everywhere, direct integration of \eqref{eq:lambda1-def} gives
\begin{equation}\label{eq:Lambda-formula}
    \Lambda(s)=
    \begin{cases}
       \displaystyle\frac{1-s}{2}-\frac{\kappa^2}{6(1-s)}
                 +\frac{\kappa H(s)}{3(1-s)},&s<\tau,\\[2.5mm]
       \displaystyle\frac{1-s}{3},&s\ge\tau.
    \end{cases}
\end{equation}
The first branch has derivative $-\lambda_1(s)$ and equals $\kappa/3$ at $s=\tau$, matching the second branch. Substituting \eqref{eq:lambda1-def} and \eqref{eq:Lambda-formula} into \eqref{eq:lambda23-def} gives \eqref{eq:mult-a}.

\smallskip\noindent\emph{Nonnegativity.}
We next check that the weights are nonnegative. On the three intervals $[0,p_0]$, $(p_0,q_0]$, and $(q_0,\tau)$, the respective branches of $\lambda_1$ are increasing. Their derivatives are
\[
    \frac{2(3\sqrt3-5)}{3(1-x)^3},
    \qquad
    \frac{16\sqrt3-27}{9(1-x)^3},
    \qquad
    \frac{\kappa^2}{3(1-x)^3},
\]
all positive. Their upper endpoint values are respectively
\[
    -4+\frac{8\sqrt3}{3},
    \qquad \frac{5\sqrt3}{12},
    \qquad \frac23,
\]
all smaller than one. The first two branches agree at $p_0$, and
\[
    \lambda_1(0)=\frac{2(\sqrt3-1)}3>0,
    \qquad
    \lambda_1(q_0^+)=1-\frac{\sqrt3}{4}>0.
\]
Thus $0<\lambda_1<1$ on $[0,\tau)$. There $\lambda_2=0$ and $\lambda_3=1-\lambda_1\ge0$. On $[\tau,1]$, all three weights equal $1/3$. This proves \eqref{eq:mult-nonnegative}.

\smallskip\noindent\emph{The coefficient of $b$.}
On $[0,p_0]$ the function $L$ is constant and $\lambda_3=1-\lambda_1$ is decreasing. Evaluating at $p_0$ gives
\[
    \lambda_3(s)-L(s)
       \ge6-\frac{10\sqrt3}{3}>0.
\]
On $[p_0,q_0]$, $\lambda_3$ is decreasing and $L(s)=\kappa/(3(1-s))$ is increasing, so
\[
    \lambda_3(s)-L(s)
       \ge\frac12-\frac{\sqrt3}{4}>0.
\]
For $q_0<s<\tau$, put $r=1-s\ge\kappa$. Here $h(s)=H(s)=0$, and
\[
    \lambda_3(s)-L(s)
       =\frac12-\frac\kappa{3r}-\frac{\kappa^2}{6r^2}
       =\frac{(r-\kappa)(3r+\kappa)}{6r^2}\ge0.
\]
For $s\ge\tau$, both $\lambda_3(s)$ and $L(s)$ equal $1/3$. This proves \eqref{eq:mult-b}.

\smallskip\noindent\emph{The constant term.}
Finally, \eqref{eq:Lambda-formula} and \eqref{eq:H0} imply
\[
    \int_0^1\lambda_1(x)\,dx
      =\Lambda(0)
      =\frac12-\frac{\kappa^2}{6}+\frac{\kappa H(0)}3
      =\frac12+\frac{\kappa^2}{6}
      =\frac{5-2\sqrt3}{3}.
\]
This proves \eqref{eq:mult-value}.

\end{proof}

\section{Checking the Weighted Constant}\label{app:arithmetic}

We check the parameters in \eqref{eq:explicit-parameters} using rational bounds. First,
\[
    86603^2-3\cdot50000^2=79609>0
\]
implies $\sqrt3<86603/50000$, and hence
\[
    \rhostar=\frac{5-2\sqrt3}{3}
       >\frac{12799}{25000}=\alpha_0.
\]
Substituting $\alpha_0$, $\gamma=503/500$, and $\eta=29/5000$ into \eqref{eq:psi} gives
\begin{equation}\label{eq:psi-exact}
    \psi=\frac{92789409461}{31530292750000}.
\end{equation}
It remains to check feasibility in \cref{thm:lifting} and the claimed improvement.

Put $z=\gamma-1=3/500$. The bounds
\begin{equation}\label{eq:log-bounds}
    z-\frac{z^2}{2}\le\ln(1+z)
       \le z-\frac{z^2}{2}+\frac{z^3}{3}
\end{equation}
give
\[
    r(\gamma)
       \ge\frac{z}{(1+z)(z-z^2/2+z^3/3)}
       >\frac{997}{1000},
    \qquad
    \chi(\gamma)
       \le\frac1{(1+z)(z-z^2/2)}.
\]
For feasibility, the upper bound on $\chi(\gamma)$ yields
\begin{equation}\label{eq:exact-slack-check}
    \frac{1-\eta}{2}-\psi\chi(\gamma)
       \ge \frac{1-\eta}{2}
          -\frac{\psi}{(1+z)(z-z^2/2)}
       >\frac8{1000}>0.
\end{equation}
For the improvement, the lower bound on $r(\gamma)$ yields
\begin{equation}\label{eq:exact-gain-check}
    \psi r(\gamma)-\frac\eta2
       >\psi\frac{997}{1000}-\frac\eta2
       >\frac{34}{10^6}.
\end{equation}
The two strict rational comparisons follow by substituting the value of $\psi$ in \eqref{eq:psi-exact}. Together they verify \cref{cor:weighted-main}.

\section{Second-Order-Supermodular Welfare}\label{app:submodular}

We show that the same three rank inequalities also apply to the second-order-supermodular setting of \citet{korula-et-al-2018}. This argument is separate from the deterministic coverage reduction: here the objects arriving in random order are items, and greedy assigns each item to an agent. We first define the analogue of benchmark contributions, then derive the same two basic rank inequalities and the same rank-cut inequality as in \cref{sec:stochastic}.

\begin{proposition}\label{prop:submodular}
Greedy is $(5-2\sqrt3)/3$-competitive for online submodular welfare in random order when the valuations are normalized, monotone, and second-order supermodular.
\end{proposition}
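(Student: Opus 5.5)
The plan is to reproduce the three rank inequalities \eqref{eq:owner-capture}, \eqref{eq:owner-residual}, and \eqref{eq:rank-cut} in the submodular welfare setting and then invoke \cref{lem:analytic-bound} verbatim. Items arrive in random order, so we give each item $j$ an independent rank $U_j\sim\mathrm{Unif}[0,1]$, and greedy assigns item $j$ to the agent of maximum marginal value given the items already assigned. We fix an optimal allocation $(O_i)_i$ of value $M=\sum_i f_i(O_i)$.

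\textbf{Benchmark contributions.} Order the items of each $O_i$ by rank and define the benchmark contribution of $j\in O_i$ as its marginal value $f_i(j\mid O_i\cap\{\text{items of smaller rank}\})$. These sum to $M$. Since $U_j$ is uniform and independent of the fixed optimum, the expected contribution density is $1$ at every rank. This calls for a variant of the density: we will instead use a fixed order on each $O_i$ and regard the benchmark as a fixed set.

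\textbf{Densities.} Let $w(x)$ be the density of greedy's gain at rank $x$. The roles of "owners" and "requests" swap: the arriving object is the item, and the benchmark competitor is the agent. The natural analogue of $a$ and $b$ uses second-order supermodularity. Greedy's gain at item $j$ is at least $f_{o(j)}(j\mid A_{o(j)})$, where $A_i$ is greedy's current bundle. We split the loss $f_i(j\mid\emptyset)-f_i(j\mid A_i)$, bounded via second-order supermodularity by a sum over items previously given to $i$, according to whether those earlier items lie in $O_i$ or outside. This is the Korula et al.\ charging. Greedy's gain on items of $O_i$ assigned to $i$ or elsewhere then plays the role of $a+b$. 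The residual inequality \eqref{eq:owner-residual} becomes: the expected marginal gain at rank $x$ is at least $1$ minus the normalized loss $J(x)$. That loss is caused by earlier greedy items whose interaction term is spread uniformly over the later ranks $(s,1)$. This gives exactly the factor $1/(1-s)$ in \eqref{eq:J-def}, and the test-function argument of \cref{lem:owner-inequalities} carries over unchanged.

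\textbf{Rank cut.} We mimic \cref{lem:rank-cut}. We run an independent greedy on the items of rank greater than $p$ and embed it into a fresh full order, as in the proof there, to evaluate its coverage of benchmark value by $pJ(q)$. We then subtract what $C_p$ already took and track the surviving mass $Y(x)$ with the same survival and $b$-loss bookkeeping. Monotonicity and submodularity replace \cref{lem:residual-comparison}: greedy's choice dominates assigning the item to the agent the auxiliary run used. Once all three inequalities hold with the same $J$ and $D$, \cref{lem:analytic-bound} gives $(5-2\sqrt3)/3$.

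\textbf{Main obstacle.} The hard part is defining $a$ and $b$ so that the loss term is exactly $\int\varphi J$ and the rank-cut loss is charged only to $b$. Second-order supermodularity must make the interaction of two earlier items with a benchmark item additive (at least as an upper bound). I would first try attributing each pair interaction $f_i(j\mid\{k\})$ loss to the earlier item $k$, and then check that the rank-cut survival argument still goes through.
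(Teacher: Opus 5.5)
Your proposal leaves its central step open, and the plan you sketch for that step fails.

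You measure each benchmark item's loss against its singleton marginal $f_i(j\mid\varnothing)$. You then propose charging the pairwise interaction $f_i(j\mid\varnothing)-f_i(j\mid\{k\})$ to the earlier greedy item $k$. Second-order supermodularity does make these pairwise terms sum to an upper bound on the true loss. However, they are computed without regard to the items greedy has already placed, so summed over time they overcount. That is fatal for \eqref{eq:owner-capture}, $w\ge a+b$, which is what cancels the $a$-terms in \cref{lem:analytic-bound}: the amount charged at each step must not exceed greedy's marginal gain at that step.

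Take one agent whose valuation covers a single element, and $m\ge3$ items that each cover it. Then $M=1$ while $\sum_j f(j\mid\varnothing)=m$, so the potential is not even normalized by $M$. Moreover, the first greedy assignment gains $1$ but is charged $m-1$. Passing to fixed-order contributions $f_i(j\mid O_i^{<j})$ does not cure this, because pairwise charges still ignore the current bundle. In the same example, each later greedy assignment of an item other than the first one in the fixed order gains $0$, yet is again charged the full unit contribution of that first item.

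Your split of the charges by whether the earlier greedy items lie in $O_i$ is also not the split the analytic lemma needs. There, $b$ must collect reductions in the contributions of benchmark items whose rank is already below the current rank, because that is where the rank-cut loss is charged.

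The paper avoids pairwise decompositions altogether. With a fixed index order it sets $g_i(A)=F(A\cup O_{\le i})-F(A\cup O_{<i})$, and charges each greedy pair $e$ the actual reduction $g_i(A)-g_i(A\cup\{e\})$ at the moment $e$ is added to the current assignment $A$. Telescoping gives $\sum_i g_i(\varnothing)=M$ and $\sum_i\bigl(g_i(A)-g_i(A\cup\{e\})\bigr)=F(A\cup\{e\})-F(A)-\bigl(F(A\cup O\cup\{e\})-F(A\cup O)\bigr)\le F(A\cup\{e\})-F(A)$, which is $w\ge a+b$. Plain submodularity gives $g_i(A)\le F(A\cup\{o_i\})-F(A)$, which yields \eqref{eq:owner-residual}.

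Second-order supermodularity is needed elsewhere, at a point your proposal passes over as ``the same survival and $b$-loss bookkeeping.'' In the matching proof the auxiliary schedules $\widehat S_v$ are disjoint. So the per-resource comparison values add up exactly to $MY(x)$, and the survival of $Y$ is exactly proportional. For welfare, different auxiliary assignments reduce overlapping parts of the contributions of the early items $P$. Writing $R_P(A,S)=\sum_{i\in P}\bigl(g_i(A)-g_i(A\cup S)\bigr)$, the rank cut uses two inequalities:
\begin{itemize}
\item $\sum_{e\in S}R_P(A,\{e\})\ge R_P(A,S)$, for the gain bound $w(x)\ge Y(x)/(1-x)$;
\item $\sum_{e\in S}\bigl(R_P(A,S)-R_P(A,S\setminus\{e\})\bigr)\le R_P(A,S)$, for the proportional survival rate.
\end{itemize}
Both require $S\mapsto R_P(A,S)$ to be submodular. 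That is exactly what second-order supermodularity provides, by making each $g_i$ supermodular. Charging the main execution's interference to $b$ then needs only monotonicity, via $R_P(A,S)-R_P(A\cup\{e\},S)=R_P(A,\{e\})-R_P(A\cup S,\{e\})\le R_P(A,\{e\})$. Without the correct potential and this submodularity, neither \eqref{eq:owner-capture} nor \eqref{eq:rank-cut} is established, so \cref{lem:analytic-bound} cannot yet be invoked.
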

\begin{proof}
Write an assignment as a set of item--agent pairs, and let $F(A)$ be its total welfare. For the analysis, we use the same welfare function on sets that may contain more than one pair for the same item. Fix an optimal assignment $O=\{o_1,\ldots,o_n\}$, where $o_i$ assigns item $i$ to its optimal agent, and write $M=F(O)>0$. The normalization means that $F(\varnothing)=0$.

\emph{Contributions of the optimal items.}
Following the gain decomposition of \citet{korula-et-al-2018}, index the items in a fixed order independent of their arrival ranks. Let $O_{<i}=\{o_j:j<i\}$ and $O_{\le i}=O_{<i}\cup\{o_i\}$, and define
\[
    g_i(A)=F(A\cup O_{\le i})-F(A\cup O_{<i}).
\]
Thus $g_i(A)$ is the remaining contribution of the optimal pair $o_i$ after the assignments in $A$ are present. Each $g_i$ is nonnegative and decreasing. Under second-order supermodularity, each $g_i$ is also supermodular. Hence, for any item set $P$,
\[
    R_P(A,S)=\sum_{i\in P}\bigl(g_i(A)-g_i(A\cup S)\bigr)
\]
is a nonnegative monotone submodular function of $S$, with value zero at $S=\varnothing$. It measures how much adding $S$ reduces the remaining contributions of items in $P$.

Telescoping gives $\sum_i g_i(\varnothing)=M$ and, for every pair $e$,
\begin{equation}\label{eq:submod-gain}
    R_{[n]}(A,\{e\})
       \le F(A\cup\{e\})-F(A).
\end{equation}
Indeed, the left side equals the right side minus
$F(A\cup O\cup\{e\})-F(A\cup O)\ge0$.
Also, $g_i(A)\le F(A\cup\{o_i\})-F(A)$ by submodularity, so greedy's gain when item $i$ arrives is at least $g_i(A)$.

Assign independent uniform arrival ranks $U_i$ and let $A_x$ be the greedy assignment through rank $x$. As in \cref{sec:owners}, normalize by $M$ and let $w(x)$ be the welfare-gain density. Split the reduction in the optimal contributions $g_i$ caused by greedy assignments into two densities: $a(x)$ for items whose arrival rank is still larger than $x$, and $b(x)$ for items whose rank is at most $x$. By \eqref{eq:submod-gain}, the welfare gained by greedy dominates the total reduction in these contributions, so
\[
    w(x)\ge a(x)+b(x).
\]
Also, when an unprocessed item $i$ arrives, greedy can gain at least its current contribution $g_i(A)$. The same deferred-rank calculation as in \cref{lem:owner-inequalities} therefore gives
\[
    w(x)\ge 1-J(x),
    \qquad J(x)=\int_0^x\frac{a(s)}{1-s}\,ds.
\]

\smallskip\noindent\emph{Comparison across a rank cut.}
Fix $p\in(0,1)$, put $q=1-p$, and let $P=\{i:U_i\le p\}$ and $Q=[n]\setminus P$. Run greedy from the empty assignment on $Q$ in an independent uniform order, obtaining $\widehat A$. As in the matching proof, embed this auxiliary order on $Q$ into the first $q$ fraction of a fresh full random order. Applying the same deferred-rank probabilities to the reductions in the $g_i$ gives
\begin{align*}
    \frac1M\E R_P(\varnothing,\widehat A)&=pJ(q),\\
    \frac1M\E R_P(\varnothing,A_p)
       &=\int_0^p b(s)\,ds
         +\int_0^p\frac{p-s}{1-s}a(s)\,ds.
\end{align*}
Since the $g_i$ are decreasing,
\[
    R_P(A_p,\widehat A)
       \ge R_P(\varnothing,\widehat A)
           -R_P(\varnothing,A_p).
\]
For $x\ge p$, let $\widehat A_{>x}$ contain the auxiliary assignments of items whose main ranks exceed $x$, and define
\[
    Y(x)=\frac1M\E R_P(A_x,\widehat A_{>x}).
\]
This is the remaining auxiliary reduction in the contributions of the early items $P$. The preceding two displays imply $Y(p)\ge qD(p)$, with $D$ as in \eqref{eq:D-def}.

Condition on $P,Q$, on the greedy execution revealed through rank $x$, and on the auxiliary assignment. For an unprocessed item, greedy's gain when that item is processed is at least the marginal value of its auxiliary assignment, and hence at least the corresponding singleton value of $R_P(A_x,\cdot)$ by \eqref{eq:submod-gain}. Submodularity gives
\[
    \sum_{e\in\widehat A_{>x}}R_P(A_x,\{e\})
       \ge R_P(A_x,\widehat A_{>x}).
\]
The sum on the left is therefore the total comparison value of the remaining auxiliary assignments. Since each remaining item has conditional rank density $1/(1-x)$,
\begin{equation}\label{eq:submod-Y-gain}
    w(x)\ge\frac{Y(x)}{1-x}.
\end{equation}

\smallskip\noindent\emph{Loss as the rank increases.}
As in the matching proof, $Y(x)$ decreases for two reasons: an auxiliary item leaves the set of unprocessed items, or the main greedy assignment reduces the remaining contribution of $P$. The normalization by $1-x$ accounts for the first effect; the second is controlled by $b(x)$.

For the first effect, use the following standard inequality for any nonnegative monotone submodular function $R$ with $R(\varnothing)=0$:
\[
    \sum_{e\in S}\bigl(R(S)-R(S\setminus\{e\})\bigr)\le R(S).
\]
Apply this with $R(S)=R_P(A_x,S)$. Since each remaining item has rank in $(x,x+dx]$ with probability $dx/(1-x)$, removing auxiliary assignments decreases $Y$ at rate at most $Y(x)/(1-x)$.

Adding a main assignment $e$ decreases $R_P(A,S)$ by at most $R_P(A,\{e\})$, because
\[
    R_P(A,S)-R_P(A\cup\{e\},S)
       =R_P(A,\{e\})-R_P(A\cup S,\{e\})
       \le R_P(A,\{e\}).
\]
All items of $P$ have already arrived. These reductions are therefore included in $b(x)$. Combining the two changes gives, almost everywhere,
\[
    Y'(x)\ge-\frac{Y(x)}{1-x}-b(x),
    \qquad
    \left(\frac{Y(x)}{1-x}\right)'
       \ge-\frac{b(x)}{1-x}.
\]
Because the item set is finite, after conditioning on $P$ and $Q$ the relevant expectations are finite polynomials in $(x-p)/(1-p)$; the derivatives above are therefore justified. Integrating from $p$ to $x$ and using $Y(p)\ge qD(p)$ and \eqref{eq:submod-Y-gain} yields
\[
    w(x)\ge D(p)-\int_p^x\frac{b(s)}{1-s}\,ds.
\]
We have therefore obtained the same three inequalities \eqref{eq:owner-capture}, \eqref{eq:owner-residual}, and \eqref{eq:rank-cut} as in the matching proof. The corresponding densities are bounded because the item set is finite, so \cref{lem:analytic-bound} applies and gives the stated ratio.
\end{proof}

\end{document}